\documentclass[10pt]{article} 
\usepackage[preprint]{tmlr}
 
\usepackage{amsmath,soul}
\usepackage{adjustbox}

\usepackage[utf8]{inputenc} 
\usepackage[T1]{fontenc}    
\usepackage{hyperref}       
\usepackage{url}            
\usepackage{booktabs}       
\usepackage{amsfonts}       
\usepackage{nicefrac}       
\usepackage{microtype}      
\usepackage{xcolor}         
\usepackage{geometry}
\usepackage{color}
\usepackage{float}
\usepackage{booktabs}
\usepackage{textcomp}
\usepackage{mathtools}
\usepackage{bm}
\usepackage{amsmath}
\usepackage{amsthm}
\usepackage{amssymb}
\usepackage{graphicx}
\usepackage{booktabs,tabularx}
\usepackage{makecell,ragged2e,cleveref}
\usepackage{xspace}
\newcolumntype{Y}{>{\RaggedRight\arraybackslash}X}

\theoremstyle{plain}
\newtheorem{thm}{Theorem \protect\theoremname}
\newtheorem{theorem}[thm]{Theorem}
\newtheorem{defn}[thm]{Definition}
\newtheorem{prop}[thm]{Proposition}

\newtheorem{lem}[thm]{Lemma}
\newtheorem{cor}[thm]{Corollary}

\newcommand{\ylnote}[1]{{\color{orange}[ Yang: #1 ]}}

\newcommand{\WGP}{\textsc{WGP}\xspace}
\newcommand{\SWGP}{\textsc{SWGP}\xspace}
\newcommand{\PWA}{\textsc{PWA}\xspace}
\newcommand{\PWAGP}{\textsc{PWA-GP}\xspace}
\newcommand{\PWAGPs}{\textsc{PWA-GPs}\xspace}
\newcommand{\PCPWA}{\textsc{PCPWA}\xspace}
\newcommand{\UIGP}{\textsc{UI-GP}\xspace}

\PassOptionsToPackage{authoryear}{natbib}
\title{Wasserstein-type Gaussian Process Regressions for Input
Measurement Uncertainty}

\author{%
  Hengrui Luo\\
  Department of Statistics\\
  Rice University, Houston, TX 77005 \\
  \texttt{hl180@rice.edu}\\ 
  Xiaoye S. Li, Yang Liu, Marcus Noack, Ji Qiang, Mark D. Risser\\
  \texttt{xsli@lbl.gov}, \texttt{liuyangzhuan@lbl.gov}, \texttt{marcusnoack@lbl.gov}, 
  \texttt{jqiang@lbl.gov}, \texttt{mdrisser@lbl.gov}\\
  Lawrence Berkeley National Laboratory,  Berkeley, CA 94709 \\
}

\begin{document}

\maketitle

\begin{abstract}
Gaussian process (GP) regression is widely used for uncertainty quantification, yet the standard formulation assumes noise-free covariates. When inputs are measured with error, this errors-in-variables (EIV) setting can lead to optimistically narrow posterior intervals and biased decisions. We study GP regression under input measurement uncertainty by representing each noisy input as a probability measure and defining covariance through Wasserstein distances between these measures. Building on this perspective, we instantiate a deterministic projected Wasserstein ARD (PWA) kernel whose one-dimensional components admit closed-form expressions and whose product structure yields a scalable, positive-definite kernel on distributions. Unlike latent-input GP models, PWA-based GPs (\PWAGPs) handle input noise without introducing unobserved covariates or Monte Carlo projections, making uncertainty quantification more transparent and robust. 
\end{abstract}

\section{Introduction}

\label{sec:intro}

Gaussian Processes (GPs) are core methods for stochastic function approximation in machine learning and surrogate modeling. They are fully defined by a prior mean and a covariance kernel, making them analytically tractable and able to encode function smoothness \citep{williams2006gpml}. GPs output predictive distributions rather than point estimates and quantify uncertainty and confidence. In a Bayesian framework GPs also propagate hyperparameter uncertainty. They have been applied in forward modeling \citep{deisenroth2010efficient,vinogradska2016stability}, uncertainty quantification \citep{tuo2022uncertainty,wang2021inference}, 
parameter optimization~\citep{gptune:ppopp21,ipac2024-thpc72},
and autonomous experimentation \citep{noack2019kriging,stach2021autonomous,noack2021gaussian,thomas2022autonomous}.
Standard GP regression assumes noise-free inputs $X$, modeling observations from $f:\mathbb R^d\rightarrow\mathbb R$ as
\[
y = f(X) + \epsilon,\quad \epsilon\sim\mathcal N(0,\sigma^2).
\]
When the inputs are noisy, i.e., we observed $U = X + \epsilon_X$ as inputs, this “errors‐in‐variables” (EIV) setting can degrade performance \citep{zhou2019gaussian}. Figure~\ref{fig:Illustration-of-error-in-variable} shows how input uncertainty can lead a GP failing to estimate $f$. Thus, to recover the true function for noisy input, GPs must incorporate input uncertainty or address EIV.
The following proposition describes the effect of EIV in terms of the uncertainty quantification of prediction. 


\begin{table*}[t]
\centering
\tiny
\setlength{\tabcolsep}{3pt}
\renewcommand{\arraystretch}{1.18}
\begin{tabularx}{\textwidth}{@{}p{2.55cm} p{2.35cm} c c c Y p{2.75cm}@{}}
\toprule
\textbf{Method (paper name)} &
\textbf{Kernel input object} &
\textbf{Dist.} &
\textbf{Latent $X$} &
\textbf{Det.} &
\textbf{EIV mechanism } &
\textbf{Citations} \\
\midrule

\textbf{Regular GP} \newline (RBF / Mat\'ern / Exp) &
Point estimate (empirical mean $m_i$) &
N & N & Y &
Ignores input uncertainty (variance / shape); standard baseline in Sec.~3 and Table~1. &
\cite{williams2006gpml} \\

\textbf{Aggregated Regular GPs} &
Noisy samples $\{U_{ij}\}$ as point inputs \newline (fit per-$j$ GP; average) &
N & N & Y &
Ensemble over replicate-specific GPs; can amplify variance in sparse regions (as observed). &
\textbf{This paper.}  \\

\textbf{IV-function / cov-only encoding}  &
Gaussian summary $(\mu_x,\Sigma_x)$ \newline but only $\Sigma_x$ used in kernel &
Y* & N & Y &
Efficient for locally Gaussian perturbations; cannot represent skew / multimodality / complex support. &
\cite{morenomunoz2018heterogeneous} \\

\midrule

\textbf{WGP} \newline (Eq.~(5), WGP-RBF) &
Measure $\mu$   &
Y & N & Y &
OT-based kernel on distributions; PD only in special cases (e.g., Gaussian $p{=}2$, or 1D). &
\cite{candelieri2022gaussian,peyre2019computational} \\

\textbf{sliced WGP} \newline (Eq.~(6), SWGP) &
Measure $\mu$; average $W_p$ over random 1D projections $u$ &
Y & N & \textbf{MC} &
PD preserved via expectation over projections; stochastic projections add Monte Carlo variability and cost. &
\cite{meunier2022slicedwasserstein,bonet2023sliced} \\

\textbf{PWA-GP} \newline (Eq.~(7)) &
Coordinate-wise 1D marginals $\{\mu_i\}_{i=1}^d$ &
Y & N & Y &
Product of 1D Wasserstein kernels with dimension-specific hyperparameters (ARD-like); PD via product closure. &
\textbf{This paper.} \\

\textbf{PCPWA} \newline (Eq.~(8)) &
Fixed directions $\{v_r\}_{r=1}^m$ \newline (e.g., PCs); 1D pushforwards $\mu^{v_r}$ &
Y & N & Y &
Same idea as PWA but along a fixed orthonormal basis (often PCA); trades expressivity vs. cost via $m$. &
\textbf{This paper.} \\

\midrule

\textbf{Latent-input calibration GP} &
Noisy $U$ with latent true $X$ (integrate out) &
N & Y & (inf.) &
Classic latent-variable EIV route; requires integration / sampling / approximations. &
\cite{kennedy2001calibration} \\

\textbf{Uncertain-input GP} &
Input distribution (often Gaussian) used in prediction / propagation &
Y & (approx) & Y &
Uses uncertain regressors (often Gaussian) in forecasting; typically relies on approximations/closed forms. &
\cite{girard2002gaussian} \\

\textbf{GP training w/ input noise} &
Noisy inputs + local approximation during training &
N & (approx) & Y &
Convolves/approximates input noise effect during training (e.g., local linearization). &
\cite{mchutchon2011gaussian} \\

\textbf{Deep GP (latent-variable)} &
Joint latent representation for inputs/outputs &
N & Y & (inf.) &
Highly flexible latent EIV modeling; inference heavier / approximate. &
\cite{damianou2013deep} \\

\midrule

\textbf{Low-rank Fr\'echet EIV regression} &
Noisy covariates w/ low-rank approximation &
N & N & Y &
Non-GP competitor class mentioned (not compared in experiments); linear/low-rank structure. &
\cite{song2023frechet} \\

\textbf{KME / MMD distribution regression}  &
Empirical samples per input distribution; RKHS mean embeddings &
Y & N & Y &
Strong non-OT distribution-input baseline; geometry differs from Wasserstein (often easier compute). &
\cite{szabo2016distreg,muandet2017kme} \\

\bottomrule
\end{tabularx}
\caption{Compact comparison of EIV / uncertain-input methods mentioned in the manuscript (including experimental baselines and Wasserstein variants) plus an added distribution-input competitor family (KME/MMD). Dist. indicates whether the model explicitly consumes a distribution/set input (Y* = Gaussian-summary distribution but only covariance used). Det.: deterministic vs Monte Carlo (MC) due to random projections. OT = optimal transport;
PD = positive definite;
KME = kernel mean embedding;
MMD = maximum mean discrepancy;
PCA = principal component analysis.}
\label{tab:eiv_methods_compact}
\end{table*}

The next result provides the technical link between the proposed Wasserstein GP and uncertainty quantification, it shows that a standard GP can undercover when input noise is ignored. 
\begin{prop}\label{ref:descriptive_prop}
Consider the errors-in-variables model
\[
Y = f(X+\varepsilon_X) + \varepsilon,\qquad
\varepsilon_X \sim \mathcal{N}(0,\Sigma_X),\quad
\varepsilon \sim \mathcal{N}(0,\sigma^2),
\]
with $\varepsilon_X \perp \varepsilon$. Let $f(x)=c+w^\top x$ be affine with $w\neq 0$.
Define the naive $(1-\alpha)$ interval that ignores input noise:
\[
I_\alpha := [\,Y - z_{1-\alpha/2}\sigma,\; Y + z_{1-\alpha/2}\sigma\,].
\]
Then for any fixed $X$,
\[
\mathbb{P}\bigl(f(X)\in I_\alpha \mid X\bigr)
=
2\Phi\!\left(\frac{z_{1-\alpha/2}\sigma}{\sqrt{\sigma^2 + w^\top\Sigma_X w}}\right)-1
\;<\; 1-\alpha
\]
whenever $w^\top\Sigma_X w>0$.
\end{prop}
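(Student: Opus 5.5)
The plan is to compute the conditional law of the interval's error directly, using the affine structure of $f$. Fix $X$. Since $f(x)=c+w^\top x$, we can write
\[
Y - f(X) = w^\top \varepsilon_X + \varepsilon =: E .
\]
By independence of $\varepsilon_X\sim\mathcal N(0,\Sigma_X)$ and $\varepsilon\sim\mathcal N(0,\sigma^2)$, and because a linear functional of a Gaussian vector is Gaussian, $E \mid X \sim \mathcal N(0,\tau^2)$ with $\tau^2 := \sigma^2 + w^\top\Sigma_X w$. Here $w^\top\varepsilon_X\sim\mathcal N(0,w^\top\Sigma_X w)$, and the variances add.

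Next we rewrite the coverage event. We have $f(X)\in I_\alpha$ if and only if $|Y-f(X)|\le z_{1-\alpha/2}\sigma$, that is, $|E|\le z_{1-\alpha/2}\sigma$. Standardizing gives $E/\tau\sim\mathcal N(0,1)$. By symmetry of the standard normal,
\[
\mathbb P(|E|\le t)=\Phi(t/\tau)-\Phi(-t/\tau)=2\Phi(t/\tau)-1
\]
for $t=z_{1-\alpha/2}\sigma$, which is the claimed identity.

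For the strict inequality, suppose $w^\top\Sigma_X w>0$. Then $\tau>\sigma$, so $z_{1-\alpha/2}\sigma/\tau<z_{1-\alpha/2}$ whenever $z_{1-\alpha/2}>0$, i.e. for $\alpha\in(0,1)$. Since $\Phi$ is strictly increasing, $2\Phi(z_{1-\alpha/2}\sigma/\tau)-1<2\Phi(z_{1-\alpha/2})-1=1-\alpha$, using $\Phi(z_{1-\alpha/2})=1-\alpha/2$.

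None of these steps is a real obstacle. The only care points are to assume implicitly that $\sigma>0$ and $\alpha\in(0,1)$, so that $z_{1-\alpha/2}>0$ and the inequality is strict, and to note that conditioning on $X$ does not affect the noise distributions because the noise is independent of $X$.
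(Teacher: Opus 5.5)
Your proposal is correct and follows essentially the same route as the paper's proof: write $Y-f(X)=w^\top\varepsilon_X+\varepsilon$, note that given $X$ this is $\mathcal N(0,\sigma^2+w^\top\Sigma_X w)$, and evaluate the symmetric interval probability. Your justification of the strict inequality, via $\sqrt{\sigma^2+w^\top\Sigma_X w}>\sigma$, strict monotonicity of $\Phi$, and $z_{1-\alpha/2}>0$ for $\alpha\in(0,1)$, fills in a step the paper only asserts.
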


Proposition~\ref{ref:descriptive_prop} formalizes a simple failure mode of a
naive GP in an errors-in-variables setting. The interval $I_\alpha$ accounts
only for the output noise variance $\sigma^2$, while the true target
$f(X)=c+w^\top X$ also varies through the input error term
$w^\top \varepsilon_X$, whose variance is $w^\top \Sigma_X w$. As a result,
the nominal $(1-\alpha)$ interval is too narrow and its actual coverage is
strictly smaller than $1-\alpha$. In this paper, we refer to this phenomenon
as \emph{undercoverage}.

One approach to taking into account EIV uses latent‐variable models: \citet{kennedy2001calibration} treat true inputs as latent and integrate them out; \citet{girard2002gaussian} apply this to time series; \citet{damianou2013deep} model inputs and outputs jointly; and \citet{mchutchon2011gaussian} convolve the GP prior with an input‐noise distribution. These methods capture input uncertainty but require high‐dimensional inference, which can be slow or unstable in non‐Gaussian settings, and often rely on numerical approximations that may bias results.

Alternatively, one can use kernels on probability measures. If each input is treated as a distribution, for example $U\sim\mathcal N(\mu_X,\Sigma_X)$, kernels based on Wasserstein distances compare these inputs directly \citep[see][]{Panaretos2019}. \citet{candelieri2022gaussian} study univariate discrete measures, and \citet{meunier2022slicedwasserstein,bonet2023sliced} develop sliced Wasserstein kernels for general distributions.

In this work, we view each noisy input $U_i$ as a probability measure $\mu_i$ and place a GP prior directly on the space of such measures. We instantiate this idea with the \PWA kernel, which preserves the familiar automatic relevance determination (ARD) structure while encoding input uncertainty through Wasserstein distances between one-dimensional marginals. Intuitively, when the input noise is small, the \PWAGP behaves similarly to a GP fitted on the (latent) noise-free inputs $(X,Y)$, whereas increasing dispersion of the input distributions naturally inflates posterior uncertainty, leading to more honest predictive intervals in error-in-variables problems.

\begin{figure}[t!]
\centering \includegraphics[width=0.40\textwidth]{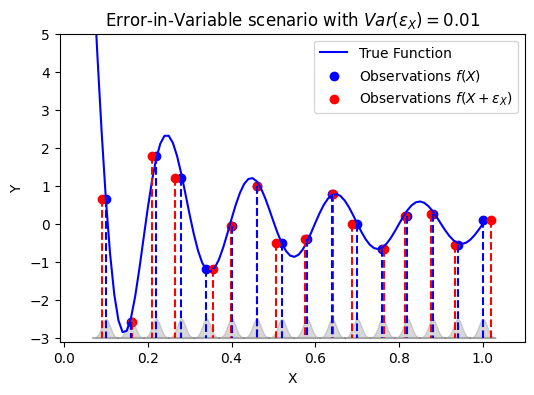}\includegraphics[width=0.40\textwidth]{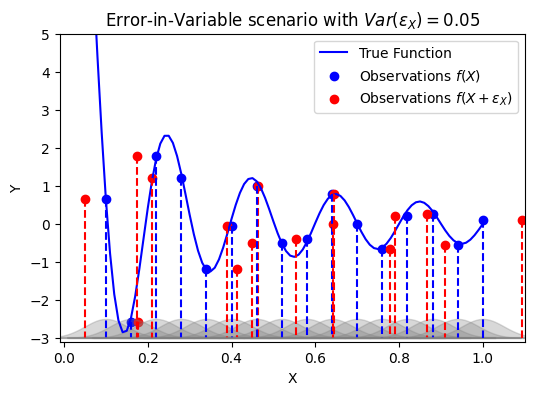}
\caption{\label{fig:Illustration-of-error-in-variable}Illustration of error-in-variable
regression problem. In both panels, the true function is $y=f(X)=\frac{\sin(10\pi\cdot X)}{2X}+(X-1)^{4}$,
but in each case the \textquotedblleft true\textquotedblright{} input locations 
$X$ are contaminated with measurement errors (with standard deviation
of 0.01 and 0.05 on the left and right, respectively). Using a GP to accurately infer the true function (the blue line) must
account for the fact that the input locations $U$ are uncertain.} 
\end{figure}
In the \PWA kernel \eqref{eq:ker_GWp-generic}, separate kernel parameters for each input dimension act as ARD-style length scales, handling dimension-specific uncertainty via an automatic relevance determination construction~\citep{williams2006gpml}. This avoids full multivariate optimal transport and randomized slicing while still capturing anisotropic input variability in a way that directly serves uncertainty quantification rather than merely introducing a new kernel form.

In EIV problems, finite-slice implementations of sliced Wasserstein kernels \citep{meunier2022slicedwasserstein} introduce additional Monte Carlo approximation error on top of measurement noise. In contrast, our \PWA construction uses deterministic (coordinate or fixed-basis) projections, avoiding slice-induced randomness while retaining valid projected kernels and ARD flexibility. Our theoretical analysis focuses on the case $p=1$, where covering-number arguments yield uniform error bounds. We summarize related methods in Table~\ref{tab:eiv_methods_compact}, discuss computational complexities in Appendix~\ref{sec:complexity}, and compare a representative subset in our uncertainty-quantification experiments.

\textbf{Summary of our contributions.}

\begin{itemize}
    \item We focus on uncertainty quantification for GP regression with input measurement error, modeling each noisy input as a probability measure and working in the Wasserstein space of distributions. In a simple linear-Gaussian EIV model we show that standard GPs that ignore input noise produce predictive intervals that strictly undercover.
    
    \item We develop a deterministic \PWA kernel and its principal-component variant \PCPWA, which define GPs directly on distributions via products of closed-form one-dimensional Wasserstein kernels. The construction preserves automatic relevance determination (ARD) while avoiding latent variables and stochastic slicing schemes commonly used for error-in-variables GPs.
    
    \item We establish a deterministic net-extension bound for the posterior mean of PWA-based GPs and, under a well-specified GP prior, obtain a uniform high-probability posterior band controlled by the posterior standard deviation. This yields a simple sufficient condition under which the resulting credible intervals are conservative.
    
    \item Through experiments on synthetic data, accelerator calibration, and NOAA drifter trajectories, we demonstrate that \PWAGPs provide competitive point predictions and markedly improved uncertainty quantification in terms of coverage and CRPS and relative to standard kernels and sliced Wasserstein baselines under input noise.
\end{itemize}

\section{GPs with Wasserstein-type Kernels}

\label{sec:kernel_design_WGP}
 
\subsection{Wasserstein and Gromov-Wasserstein distances}
\label{sec:Wdist}

Wasserstein distances improve how distributional inputs are handled in machine learning. \citet{arjovsky2017wasserstein} highlighted their use for output distributions, but they also apply to input uncertainty.

Intuitively, imagine moving “sand” from a pile at $x$ to fill a hole at $y$: the cost is the mass moved times the distance \citep{mallasto2017learning}. We write $\mathcal P_p(X)$ for the set of Borel probability measures on $X$ with finite $p$th moment, i.e. $\int d_X(x,x_0)^p\,d\mu(x)<\infty$ for some (equivalently any) $x_0\in X$.
 Formally:

\begin{defn}
($p$-Wasserstein distance) Let $p\in[1,\infty)$ and let $\mu,\nu\in\mathcal P_p(X)$ be probability measures on a metric space $(X,d_X)$ with finite $p$th moments. The $p$-Wasserstein distance is
\begin{equation}
W_p(\mu,\nu)
=\Bigl(\inf_{\gamma\in\Gamma(\mu,\nu)}\int_{X\times X} d_X(x,y)^p\,d\gamma(x,y)\Bigr)^{1/p},
\label{eq:def_Wp}
\end{equation}
where $\Gamma(\mu,\nu)$ denotes the set of couplings with marginals $\mu$ and $\nu$. The case $p=1$ is the Earth Mover’s Distance \citep{peyre2019computational}.
\end{defn}

$W_p(\mu,\nu)$ is the minimal “effort” to transport $\mu$ to $\nu$, raising cost to the $p$th power. These distances scale polynomially in asymptotic regimes \citep{panaretos2020invitation,weed2019sharp}.

\begin{defn}
(Gromov-Wasserstein distance)
Let $\mu\in\mathcal P_p(X)$ and $\nu\in\mathcal P_p(Y)$, where $(X,d_X)$ and $(Y,d_Y)$ are metric spaces.
The $p$-Gromov-Wasserstein distance is
\begin{equation}
GW_p(\mu,\nu)
:=
\left(
\inf_{\pi\in\Gamma(\mu,\nu)}
\iint_{(X\times Y)^2}
\bigl|d_X(x,x')-d_Y(y,y')\bigr|^p\,
d\pi(x,y)\,d\pi(x',y')
\right)^{1/p}.
\label{eq:def_GW_p}
\end{equation}
\end{defn}

$GW_p$ matches the relational geometry of $X$ and $Y$, finding correspondences that minimize distortion.

In essence, $GW_p$ generalizes $W_p$ to compare distributions on different spaces by focusing on pairwise distances. $W_p$ uses a common ground metric, while $GW_p$ handles cross-domain structures, making both valuable for pratical tasks in image processing, machine learning, and statistical physics. However, we lack knowledge of the PD property for GW-type kernels.

While the statement of Wasserstein and Gromov-Wasserstein kernels
is straightforward, calculating $W_{p}$ and $GW_{p}$ can be quite
challenging. For $\bm{x}\in\mathbb{R}^{1}$, we can have
a closed-form expression for any $p\ge 1$ 
: 
\begin{prop}
\label{prop:one-dim} \citep{meunier2022slicedwasserstein} Let $\mu,\nu$ be probability measures defined
on the real line $\mathbb{R}^1$ and denote their cumulative distribution
functions by $F_{\mu}(x)$ and $F_{\nu}(x)$. Then,
\begin{align}
W_{p}(\mu,\nu)=\left(\int_{0}^{1}\left|F_{\mu}^{-1}(q)-F_{\nu}^{-1}(q)\right|^{p}\,dq\right)^{1/p}.
\end{align}
\end{prop}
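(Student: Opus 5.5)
The plan is the standard quantile-coupling argument: exhibit the quantile coupling, show it is admissible, and prove it is optimal for the convex cost $c(x,y)=|x-y|^p$. Throughout, $F_\mu^{-1}(q):=\inf\{x: F_\mu(x)\ge q\}$ is the generalized inverse, and $\mu,\nu\in\mathcal P_p(\mathbb{R})$.

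\textbf{Admissibility.} Let $Q\sim\mathrm{Unif}(0,1)$. The standard fact $F_\mu^{-1}(q)\le x \iff q\le F_\mu(x)$ gives
\[
\mathbb{P}(F_\mu^{-1}(Q)\le x)=F_\mu(x),
\]
so $F_\mu^{-1}(Q)\sim\mu$, and likewise $F_\nu^{-1}(Q)\sim\nu$. Hence $\gamma^\ast:=(F_\mu^{-1},F_\nu^{-1})_\#\mathrm{Leb}_{[0,1]}$ lies in $\Gamma(\mu,\nu)$. Its cost is exactly $\int_0^1|F_\mu^{-1}(q)-F_\nu^{-1}(q)|^p\,dq$, and this is finite by the moment assumption. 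This yields the upper bound $W_p^p(\mu,\nu)\le\int_0^1|F_\mu^{-1}-F_\nu^{-1}|^p\,dq$.

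\textbf{Optimality, the main step.} I would first treat finitely supported measures with equal atom weights, $\mu=\frac1n\sum\delta_{x_i}$ and $\nu=\frac1n\sum\delta_{y_j}$. By Birkhoff--von Neumann, the infimum of this linear problem over couplings (doubly stochastic matrices) is attained at a permutation. For $p\ge1$ the cost $h(t)=|t|^p$ is convex, and for $x_1<x_2$, $y_1<y_2$ one has
\[
h(x_1-y_1)+h(x_2-y_2)\le h(x_1-y_2)+h(x_2-y_1).
\]
This follows from convexity, because the arguments on the left are majorized by those on the right: the sums agree, and the left pair lies between the right pair. Uncrossing any inversion therefore never increases the cost, so the sorted (monotone) matching is optimal. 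The sorted matching is precisely the quantile coupling for these empirical measures.

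\textbf{Passage to general measures.} Next I extend to arbitrary $\mu,\nu\in\mathcal P_p(\mathbb{R})$ by approximation. Take $\mu_n=\frac1n\sum_{k=1}^n\delta_{F_\mu^{-1}((k-1/2)/n)}$, and similarly $\nu_n$. Then:
\begin{itemize}
\item $W_p(\mu_n,\mu)\to0$, because $F_{\mu_n}^{-1}\to F_\mu^{-1}$ in $L^p(0,1)$ (monotone functions converge a.e., and $|F_\mu^{-1}|^p$ is integrable, which controls the tails). Note this direction uses only the upper bound already proven, so it is not circular.
\item The right-hand side is continuous in these $L^p$ quantile norms, so $\|F_{\mu_n}^{-1}-F_{\nu_n}^{-1}\|_{L^p}\to\|F_\mu^{-1}-F_\nu^{-1}\|_{L^p}$.
\item By the triangle inequality for $W_p$, $W_p(\mu_n,\nu_n)\to W_p(\mu,\nu)$.
\end{itemize}
Equality holds for each $n$ by the discrete step, so it holds in the limit.

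An alternative to the approximation is to show cyclical monotonicity directly: any coupling whose support is not monotone can be improved by the same uncrossing, and the only monotone coupling is $\gamma^\ast$. The approximation route avoids measure-theoretic subtleties, so I would use it.

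\textbf{Expected obstacle.} The delicate part is the limiting argument: the $L^p$ convergence of the discretized quantile functions when $\mu$ has heavy but $p$-integrable tails. I would handle it by dominated convergence on compact subintervals $[\delta,1-\delta]$, and bound the endpoint pieces by the tail integrals of $|F_\mu^{-1}|^p$.
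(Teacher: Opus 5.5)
\textbf{Verdict.} Your proposal is a correct, self-contained proof, but there is no in-paper argument to compare it with: the paper never proves this proposition and only imports it with a citation to \citet{meunier2022slicedwasserstein}.

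\textbf{Comparison with the classical route.} The classical proof behind that citation works directly with general measures. For a cost $h(x-y)$ with $h$ convex, one shows that an optimal plan can be taken with monotone support, via cyclical monotonicity built on the same two-point inequality you prove. The only coupling with monotone support is $(F_\mu^{-1},F_\nu^{-1})_\#\mathrm{Leb}_{[0,1]}$. That route needs existence of optimal plans and a support argument. In exchange it identifies the optimizer (uniquely when $p>1$) and gives the formula even without finite $p$th moments, with both sides possibly $+\infty$. You replace it with a finite combinatorial statement (Birkhoff--von Neumann plus uncrossing) and a stability step. That step uses only the already-proved upper bound and the triangle inequality for $W_p$, so it is not circular. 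Your route is more elementary. The moment assumption it needs is exactly the paper's convention that $W_p$ is defined on $\mathcal P_p$.

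\textbf{Points to tighten.}
\begin{enumerate}
\item With repeated atoms, a coupling of the empirical measures is not literally a doubly stochastic matrix. Say instead that every coupling is the image of one, obtained by splitting mass uniformly among repeated indices; that is all the minimization requires.
\item The phrase ``monotone functions converge a.e.'' should be made precise. Write $F_{\mu_n}^{-1}(q)=F_\mu^{-1}((k-\tfrac12)/n)$ with $|(k-\tfrac12)/n-q|\le 1/(2n)$. Convergence then holds at every continuity point of $F_\mu^{-1}$, which is all but countably many points.
\item Your tail control works precisely because you chose midpoints. Where $|F_\mu^{-1}|$ is nonincreasing near $0$, one has $\tfrac1n|F_\mu^{-1}((k-\tfrac12)/n)|^p\le 2\int_{(k-1)/n}^{(k-1/2)/n}|F_\mu^{-1}(q)|^p\,dq$. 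Summing gives $\int_0^\delta|F_{\mu_n}^{-1}|^p\,dq\le 2\int_0^{\delta+1/n}|F_\mu^{-1}|^p\,dq$. If $|F_\mu^{-1}|$ is not eventually nonincreasing near $0$, then $F_\mu^{-1}$ is bounded there. The bound near $1$ is symmetric.
\item Your side remark that a non-monotone coupling ``can be improved'' by uncrossing is false for $p=1$. There, uncrossing can leave the cost unchanged, and optimal plans need not be monotone. This does not affect the approximation route you actually use.
\end{enumerate}
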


\subsection{Wasserstein‐type covariance kernels}
\label{sec:sparseKernels}
We can define $k_{W_p}$ as follows, 
\begin{equation}
k_{W_p}(\mu,\nu)
=\lambda\exp\bigl(-\sigma\,W_p(\mu,\nu)^p\bigr)
\label{eq:kernel_Wp}
\end{equation}
To use $k_{W_p}$ as a GP covariance function, it is not enough that $W_p$ be a
metric. A standard sufficient condition is that the exponent
$\psi(\mu,\nu)$ be \emph{conditionally negative definite} (CND): a symmetric
function $\psi$ on $\mathcal P$ is CND if for every $n\ge 1$, every
$\mu_1,\dots,\mu_n\in\mathcal P$, and every coefficients
$c_1,\dots,c_n\in\mathbb R$ satisfying $\sum_{i=1}^n c_i=0$, one has
\[
\sum_{i,j=1}^n c_i c_j \psi(\mu_i,\mu_j)\le 0.
\]
By Schoenberg's theorem, if $\psi$ is CND and $\psi(\mu,\mu)=0$, then
$k(\mu,\nu)=\exp(-\sigma\psi(\mu,\nu))$ is positive definite for every
$\sigma>0$.

In \eqref{eq:kernel_Wp}, the exponent is $\psi(\mu,\nu)=W_p(\mu,\nu)^p$,
not $W_p(\mu,\nu)^2$ unless $p=2$. In general, $W_p(\mu,\nu)^p$ need not be
CND on $\mathbb R^d$ for arbitrary $d>1$, so positive definiteness is not
automatic outside special settings. This is why we either restrict attention
to cases where positive definiteness is known (e.g., certain one-dimensional
or Gaussian settings) or use projected/sliced constructions whose
one-dimensional factors are positive definite.


A special case is when the inputs are Gaussian and $p=2$: here the projection
is not needed, the kernel \eqref{eq:kernel_Wp} is positive definite in the
Gaussian setting, and closed forms for $W_2$ enable efficient computation.
For example,
\[
W_2^2\!\bigl(\mathcal N(\mu_x,\Sigma_x),\mathcal N(\mu_y,\Sigma_y)\bigr)
=
\|\mu_x-\mu_y\|^2
+
\operatorname{tr}\!\Bigl(
\Sigma_x+\Sigma_y
-2(\Sigma_x^{1/2}\Sigma_y\Sigma_x^{1/2})^{1/2}
\Bigr).
\]
If $\Sigma_x=\Sigma_y$, this reduces to $\|\mu_x-\mu_y\|^2$, so
\eqref{eq:kernel_Wp} becomes the usual squared-exponential
(Gaussian/RBF) kernel on the means  \citep{dowson1982frechet,peyre2019computational,horn2012matrix}.
Alternatively, one may use a \emph{sliced} kernel by averaging over 1D projections $u\in\mathbb S^{d-1}$,
\begin{equation}
k(\mu,\nu)
=\lambda\exp\!\bigl(-\sigma\,\mathbb{E}_u[W_p(\mu_u,\nu_u)^p]\bigr),
\label{eq:slice_kernel}
\end{equation}
which also remains PD \citep{meunier2022slicedwasserstein}.  Computationally we have also derived bounds on derivatives for numerical stability (See Appendix \ref{appdx:derivatives-bounds}).

\paragraph{Model names used in the paper.} We refer to the GP prior with covariance $k_{W_p}$ in \eqref{eq:kernel_Wp} as a \emph{Wasserstein GP} (\WGP). The GP using the sliced kernel \eqref{eq:slice_kernel} is denoted \SWGP. The GP using our projected ARD kernel \eqref{eq:ker_GWp-generic} is denoted \PWAGP, and its fixed-basis/principal-component variant \eqref{eq:ker_PCPWA} is denoted \PCPWA. After these first definitions we use only the abbreviations.

To ensure PD in higher dimensions without solving a $d$-dimensional optimal transportation problem, we use \emph{Projected Wasserstein ARD} kernels.  For measures $\mu,\nu$ on $\mathbb{R}^d$, 
define
\begin{equation}
k(\mu,\nu)
=\lambda\prod_{i=1}^d 
\exp\!\bigl(-\sigma_i\,W_p(\mu_i,\nu_i)^p\bigr),
\label{eq:ker_GWp-generic}
\end{equation}

where $\mu_i$ 
is the marginal distribution of measure $\mu$ along the $i$-th dimension defined on $\mathbb{R}^1$; whenever the 1D factors are PD, so is the product kernel \citep{wilson2015kernel,bilionis2013multi}. Similar construction can be done using Gromov-Wasserstein distances, yet the projections can be defined through marginals of different dimensions. Unlike previous EIV methods that introduce latent true inputs and require high‐dimensional integration or approximate marginalization \citep{kennedy2001calibration,girard2002gaussian,damianou2013deep,mchutchon2011gaussian}, PWA avoids explicit latent‐variable inference.  

Although this is a hybrid of ARD kernel and projection technique, we found this a novel construction for EIV problem that does not exist in the current literature. This is different from the sliced Wasserstein kernels \citep{meunier2022slicedwasserstein} (i.e., ``project-Wasserstein/average''), since each $(\lambda_i,\sigma_i)$ can adapt to dimension-specific uncertainty (i.e., ARD type construction), avoiding the cost of a full multivariate transport and iterative projections during the random slicing. This is also different from earlier work by  \citet{bachoc2017gaussian}, where they proposed to consider $p=2$ only; and \PWAGP is also different from  their ``PCA'' GP, which uses truncated principal component projections ARD kernel without computing Wasserstein distances over marginals (i.e., ``project-ARD''). Our proposed kernel \eqref{eq:ker_GWp-generic} is a generalization introducing the scheme of  ``Projected Wasserstein ARD''.

Beyond projecting onto coordinate axes, one may also project onto a fixed
orthonormal basis of $\mathbb R^d$, such as principal components.
Let $\{v_r\}_{r=1}^m\subset\mathbb R^d$ be fixed orthonormal directions and
define the linear functional
\[
\ell_r:\mathbb R^d\to\mathbb R,
\qquad
\ell_r(x)=v_r^\top x.
\]
For a Borel probability measure $\mu$ on $\mathbb R^d$, define its projected
marginal along $v_r$ by
\[
\mu_{v_r}:=(\ell_r)_\#\mu.
\]
Equivalently, for every Borel set $A\subset\mathbb R$,
$
\mu_{v_r}(A)
=
\mu\bigl(\{x\in\mathbb R^d: v_r^\top x\in A\}\bigr).
$
Here $(\ell_r)_\#\mu$ denotes the push-forward of $\mu$ through $\ell_r$.
For a Borel probability measure $\mu$ on $\mathbb{R}^d$, we denote by $\mu_{v_r}:=(\ell_r)_\#\mu$
the push-forward of $\mu$ through $\ell_r$, i.e., the one-dimensional marginal
of $\mu$ along direction $v_r$. Explicitly, for every Borel set
$A \subset \mathbb{R}$,
\[
\mu_{v_r}(A)
=
((\ell_r)_\#\mu)(A)
:=
\mu\bigl(\ell_r^{-1}(A)\bigr)
=
\mu\bigl(\{x \in \mathbb{R}^d : v_r^\top x \in A\}\bigr).
\]
We use the notation $(\ell_r)_\#\mu$ rather than $\ell_r(\mu)$ because
$\ell_r$ acts on points $x \in \mathbb{R}^d$, while the induced action on
measures is the push-forward map.

We define the \emph{principal-component projected Wasserstein ARD} (\PCPWA) kernel by
\begin{equation}
k_{\mathrm{PCPWA}}(\mu,\nu)
=\lambda\prod_{r=1}^m
\exp\!\bigl(-\sigma_r\,W_p(\mu_{v_r},\nu_{v_r})^p\bigr),
\label{eq:ker_PCPWA}
\end{equation}
with direction-specific length scales $(\sigma_r)_{r=1}^m$ and a global amplitude $\lambda$.

When $m=d$ and $v_r=e_r$ are the canonical basis vectors, \eqref{eq:ker_PCPWA} reduces exactly to the coordinate PWA kernel \eqref{eq:ker_GWp-generic}, so PWA is a special case of \PCPWA.  Whenever the 1D factors in \eqref{eq:ker_PCPWA} are PD, their pullbacks by $\ell_r$ are PD on $\mathcal{P}_p(\mathbb{R}^d)$, and the product over $r$ remains PD by closure under products.  Thus, \PCPWA inherits the same kernel-validity guarantees as PWA in those validated 1D settings while allowing projections onto any fixed system of directions, including principal components computed once from the design.

\subsection{\label{sec:Uniform-Error-Bound}Uniform Error Bound for Wasserstein-type GP (p=1)}
To assess models under input uncertainty, we derive a uniform bound on the posterior mean over a compact class of 1D distributions. Since our PWA kernel \eqref{eq:ker_GWp-generic} is built from 1D marginals, we limit the theory below to the case $p=1$. The deterministic net-extension argument parallels Theorem~3.1 of \citet{lederer2019uniform}, while the posterior-probability corollary stated below additionally assumes a well-specified GP prior.

Let $[a,b]\subset\mathbb R$ and let $\mathcal P$ be a set of Borel probability
measures supported on $[a,b]$ such that the Wasserstein covering numbers satisfy
$M(\tau,\mathcal P)\le C\tau^{-\alpha_0}$ for some $\alpha_0>0$.
Assume the quantile functions are uniformly Lipschitz in their argument, i.e.,
there exists $\ell>0$ such that
\[
|F_\mu^{-1}(q)-F_\mu^{-1}(q')|
\le \ell |q-q'|
\qquad
\text{for all } q,q'\in[0,1],\ \mu\in\mathcal P.
\]
Let $f:\mathcal P\to\mathbb R$ be $L_f$-Lipschitz with respect to $W_1$.
Given observations $(\mu_i,y_i)_{i=1}^N$, define
\[
K_{ij}:=k(\mu_i,\mu_j),
\qquad
k_\mu := \bigl(k(\mu,\mu_1),\dots,k(\mu,\mu_N)\bigr)^\top,
\qquad
\alpha := (K+\sigma_*^2 I_N)^{-1} y.
\]
Then
\[
\nu_N(\mu)=k_\mu^\top \alpha,
\qquad
\sigma_N^2(\mu)=k(\mu,\mu)-k_\mu^\top (K+\sigma_*^2 I_N)^{-1}k_\mu.
\]
Define
\[
L_k
:=
\sup_{\nu\in\mathcal P}\sup_{\mu\neq\mu'}
\frac{|k(\mu,\nu)-k(\mu',\nu)|}{W_1(\mu,\mu')},
\qquad
L_{\nu_N}:=N L_k \|\alpha\|_\infty,
\]
and let $\omega_{\sigma_N}$ denote a modulus of continuity for $\sigma_N$, i.e.
\[
|\sigma_N(\mu)-\sigma_N(\nu)|\le \omega_{\sigma_N}(\tau)
\qquad
\text{whenever } W_1(\mu,\nu)\le \tau.
\]
The next theorem is the deterministic extension step that turns control on a finite $\tau$-net into a uniform bound over $\mathcal P$.
\begin{theorem}\label{thm:unif_bound}
Let $\{\bar\mu_1,\ldots,\bar\mu_M\}$ be a $\tau$-net of $\mathcal P$ in $W_1$. Assume $f:\mathcal P\to\mathbb R$ is $L_f$-Lipschitz with respect to $W_1$, the kernel $k$ is $L_k$-Lipschitz in its first argument with respect to $W_1$, and $\sigma_N$ admits the modulus of continuity $\omega_{\sigma_N}$. If, for some constant $B\ge 0$,
\[
|f(\bar\mu_j)-\nu_N(\bar\mu_j)| \le B\,\sigma_N(\bar\mu_j),
\qquad j=1,\ldots,M,
\]
then for every $\mu\in\mathcal P$,
\[
|f(\mu)-\nu_N(\mu)|
\le B\,\sigma_N(\mu) + (L_f+L_{\nu_N})\tau + B\,\omega_{\sigma_N}(\tau).
\]
\end{theorem}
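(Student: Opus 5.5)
The plan is a triangle-inequality argument through the nearest net point, in the spirit of Theorem~3.1 of \citet{lederer2019uniform}. Fix $\mu\in\mathcal P$. Since $\{\bar\mu_1,\ldots,\bar\mu_M\}$ is a $\tau$-net in $W_1$, we can choose an index $j$ with $W_1(\mu,\bar\mu_j)\le\tau$. We then decompose
\[
|f(\mu)-\nu_N(\mu)|
\le |f(\mu)-f(\bar\mu_j)| + |f(\bar\mu_j)-\nu_N(\bar\mu_j)| + |\nu_N(\bar\mu_j)-\nu_N(\mu)|
\]
and bound the three terms separately.

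\textbf{First and second terms.} The first term is at most $L_f\tau$ because $f$ is $L_f$-Lipschitz with respect to $W_1$. The second term is at most $B\,\sigma_N(\bar\mu_j)$ by the hypothesis on the net.

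\textbf{Third term.} We first check that $\nu_N$ is Lipschitz with constant $L_{\nu_N}=N L_k\|\alpha\|_\infty$. Write $\nu_N(\mu)=\sum_{i=1}^N \alpha_i k(\mu,\mu_i)$. Then
\[
|\nu_N(\mu)-\nu_N(\mu')|\le \sum_{i=1}^N |\alpha_i|\,|k(\mu,\mu_i)-k(\mu',\mu_i)| \le N\|\alpha\|_\infty L_k W_1(\mu,\mu').
\]
The last step uses the definition of $L_k$ as a supremum over the second argument $\nu\in\mathcal P$. We need the training measures $\mu_i$ to lie in $\mathcal P$, or at least need the Lipschitz bound on $k$ in its first argument to hold uniformly over the second argument ranging over the $\mu_i$. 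This is the one point that requires care, and we will state it explicitly as the standing interpretation of "$k$ is $L_k$-Lipschitz in its first argument". With this, the third term is at most $L_{\nu_N}\tau$.

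\textbf{Converting $\sigma_N(\bar\mu_j)$ to $\sigma_N(\mu)$.} By the modulus of continuity,
\[
\sigma_N(\bar\mu_j)\le \sigma_N(\mu)+\omega_{\sigma_N}(\tau).
\]
Since $B\ge0$, multiplying by $B$ preserves the inequality.

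Summing the three bounds gives
\[
|f(\mu)-\nu_N(\mu)|\le B\sigma_N(\mu)+(L_f+L_{\nu_N})\tau+B\,\omega_{\sigma_N}(\tau).
\]
Because $\mu$ was arbitrary, the bound holds uniformly over $\mathcal P$.

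No probabilistic input is needed; the argument is purely deterministic. The main (mild) obstacle is the Lipschitz bound for $\nu_N$, which depends on the uniformity of $L_k$ over the second argument.
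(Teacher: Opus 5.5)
Your proposal is correct and matches the paper's proof essentially line for line. It uses the same three-term triangle inequality through a net point $\bar\mu_j$ with $W_1(\mu,\bar\mu_j)\le\tau$, the same constant $L_{\nu_N}=NL_k\|\alpha\|_\infty$ derived from $\nu_N(\mu)=\sum_i\alpha_i k(\mu,\mu_i)$, and the same modulus-of-continuity step from $\sigma_N(\bar\mu_j)$ to $\sigma_N(\mu)$. You also make two points explicit that the paper leaves implicit: that $B\ge 0$ is needed, and that the training measures must satisfy $\mu_i\in\mathcal P$ (or $k$ must be Lipschitz uniformly over those second arguments), since $L_k$ is defined as a supremum over $\nu\in\mathcal P$.
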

The proof appears in Appendix \ref{sec:proof_main_thm}, which mirrors \citet{lederer2019uniform} but imposes conditions on $L_\infty$ bounds instead of requiring its continuity, which is different from \citet{lederer2019uniform}, and one can use the 1D CDF Lipschitz nets from Proposition \ref{prop:one-dim} to bound $M(\tau,\mathcal P)\,$. 

The proof appears in Appendix \ref{sec:proof_main_thm}. It is a deterministic net-extension argument: the only stochastic ingredient enters later, when the finite-net inequalities are obtained from Gaussian posterior tails under a well-specified GP model.

\begin{cor}\label{cor:unif_bound_gp}
Assume in addition that the regression model is well specified: $f\sim\mathcal{GP}(0,k)$, the design points
$(\mu_i)_{i=1}^N$ are fixed, and the observations satisfy
\[
y_i=f(\mu_i)+\epsilon_i,\qquad \epsilon_i \stackrel{\mathrm{ind}}{\sim}\mathcal N(0,\sigma_*^2),
\qquad i=1,\dots,N.
\]
Then for any $\tau>0$ and $\delta\in(0,1)$, conditional on $\mathcal D_N$, with posterior probability at least $1-\delta$,
\[
|f(\mu)-\nu_N(\mu)|\le \sqrt{\beta(\tau)}\,\sigma_N(\mu)+\gamma(\tau)
\qquad\text{for all }\mu\in\mathcal P,
\]
where
\[
\beta(\tau)
=
\left[
\Phi^{-1}\!\left(1-\frac{\delta}{2\,M(\tau,\mathcal P)}\right)
\right]^2,
\qquad
\gamma(\tau)
=
(L_f+L_{\nu_N})\,\tau+\sqrt{\beta(\tau)}\,\omega_{\sigma_N}(\tau).
\]
\end{cor}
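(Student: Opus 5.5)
The plan is to reduce Corollary~\ref{cor:unif_bound_gp} to Theorem~\ref{thm:unif_bound}. That theorem is purely deterministic, so the only probabilistic work is to show that the finite-net hypothesis holds with $B=\sqrt{\beta(\tau)}$ on an event of posterior probability at least $1-\delta$.

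\textbf{Step 1: choose the net.} Fix $\tau>0$ and take a $\tau$-net $\{\bar\mu_1,\dots,\bar\mu_M\}$ of $\mathcal P$ in $W_1$ with $M=M(\tau,\mathcal P)$. This is possible because $M(\tau,\mathcal P)$ is by definition the minimal cardinality of such a net, and it is finite by the covering assumption $M(\tau,\mathcal P)\le C\tau^{-\alpha_0}$. The net is deterministic: it depends only on $\mathcal P$ and $\tau$, not on $f$.

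\textbf{Step 2: pointwise posterior tails.} Under the well-specified model ($f\sim\mathcal{GP}(0,k)$, fixed design, independent Gaussian noise with variance $\sigma_*^2$), standard Gaussian conditioning gives, for each fixed $\mu$,
\[
f(\mu)\mid\mathcal D_N\sim\mathcal N\bigl(\nu_N(\mu),\sigma_N^2(\mu)\bigr),
\]
with $\nu_N$ and $\sigma_N^2$ exactly as defined before the theorem. Consider a net point with $\sigma_N(\bar\mu_j)>0$. For it,
\[
\mathbb P\bigl(|f(\bar\mu_j)-\nu_N(\bar\mu_j)|>\sqrt{\beta}\,\sigma_N(\bar\mu_j)\,\big|\,\mathcal D_N\bigr)=2\bigl(1-\Phi(\sqrt\beta)\bigr).
\]
With $\sqrt\beta=\Phi^{-1}\bigl(1-\delta/(2M)\bigr)$, this probability equals $\delta/M$. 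Note that $\sqrt\beta\ge0$ as long as $\delta/(2M)\le 1/2$, which holds since $\delta<1$ and $M\ge1$. If instead $\sigma_N(\bar\mu_j)=0$, then $f(\bar\mu_j)=\nu_N(\bar\mu_j)$ almost surely under the posterior, so the inequality holds trivially.

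\textbf{Step 3: union bound.} A union bound over the $M$ net points shows that, with posterior probability at least $1-\delta$,
\[
|f(\bar\mu_j)-\nu_N(\bar\mu_j)|\le\sqrt{\beta(\tau)}\,\sigma_N(\bar\mu_j)\quad\text{for all } j=1,\dots,M.
\]
No independence between net points is needed, which is why this step is simple.

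\textbf{Step 4: apply the theorem.} On this event, apply Theorem~\ref{thm:unif_bound} with $B=\sqrt{\beta(\tau)}$. This yields, for every $\mu\in\mathcal P$,
\[
|f(\mu)-\nu_N(\mu)|\le\sqrt\beta\,\sigma_N(\mu)+(L_f+L_{\nu_N})\tau+\sqrt\beta\,\omega_{\sigma_N}(\tau),
\]
which is exactly the claimed bound with the stated $\gamma(\tau)$.

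\textbf{Main obstacle.} The delicate point is that Theorem~\ref{thm:unif_bound} requires $f$ to be $L_f$-Lipschitz with respect to $W_1$, but $f$ is now a random GP sample. I would handle this by reading the corollary as conditional on the regularity assumptions: the statement holds on the intersection of the net event with the event that the sample $f$ is $L_f$-Lipschitz. The simplest rigorous version treats $L_f$ as a quantity that bounds the Lipschitz constant of the sample path (for instance, by assumption on the prior). The other quantities are not a problem. Both $L_{\nu_N}=NL_k\|\alpha\|_\infty$ and $\omega_{\sigma_N}$ are $\mathcal D_N$-measurable, so they are fixed once we condition on the data. The bound $L_{\nu_N}$ follows from
\[
|\nu_N(\mu)-\nu_N(\mu')|\le\sum_i|\alpha_i|\,|k(\mu,\mu_i)-k(\mu',\mu_i)|\le NL_k\|\alpha\|_\infty W_1(\mu,\mu'),
\]
which is already built into the theorem. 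I would state the Lipschitz caveat explicitly rather than attempt a probabilistic bound on $L_f$.
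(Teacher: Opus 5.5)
Your proposal follows the paper's proof step for step: you fix a $\tau$-net of size $M(\tau,\mathcal P)$, use the Gaussian posterior $f(\bar\mu_j)\mid\mathcal D_N\sim\mathcal N(\nu_N(\bar\mu_j),\sigma_N^2(\bar\mu_j))$ to get failure probability $\delta/M$ at each net point (treating $\sigma_N(\bar\mu_j)=0$ separately), take a union bound, and invoke Theorem~\ref{thm:unif_bound} with $B=\sqrt{\beta(\tau)}$. Your Lipschitz caveat is a point the paper leaves implicit, but your proposed fix of a deterministic $L_f$ holding almost surely under the prior is impossible for a nondegenerate GP, since each increment $f(\mu)-f(\mu')$ is Gaussian with positive variance; the honest versions are either to let $L_f$ be the random sample-path constant, or to pay an extra $\delta_L$ for a high-probability Lipschitz bound in the spirit of \citet{lederer2019uniform}, which gives probability at least $1-\delta-\delta_L$.
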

The same argument applies to the \PCPWA kernel with $D$ replacing $W_1$.

\begin{cor}\label{cor:unif_bound_pcpwa}
Let $\mathcal P$ be a set of Borel probability measures on $\mathbb R^d$, and define
\[
D(\mu,\nu):=\Biggl(\sum_{r=1}^m a_r\,W_1(\mu_{v_r},\nu_{v_r})^2\Biggr)^{1/2}
\]
for fixed directions $\{v_r\}_{r=1}^m\subset\mathbb R^d$ and weights $a_r>0$. Assume $f:\mathcal P\to\mathbb R$ is $L_f$-Lipschitz with respect to $D$, the \PCPWA kernel $k_{\mathrm{PCPWA}}$ from \eqref{eq:ker_PCPWA} is $L_k$-Lipschitz in its first argument with respect to $D$, and $\sigma_N$ admits the modulus of continuity $\omega_{\sigma_N}$ with respect to $D$. If $(\mathcal P,D)$ admits a $\tau$-net of size $M_{\mathrm{PCPWA}}(\tau,\mathcal P)$ and the model is well specified with prior $f\sim\mathcal{GP}(0,k_{\mathrm{PCPWA}})$, then for any $\tau>0$ and $\delta\in(0,1)$, conditional on $\mathcal D_N$, with posterior probability at least $1-\delta$,
\[
|f(\mu)-\nu_N(\mu)|
\le
\sqrt{\beta_{\mathrm{PCPWA}}(\tau)}\,\sigma_N(\mu)
+\gamma_{\mathrm{PCPWA}}(\tau)
\qquad\text{for all }\mu\in\mathcal P,
\]
where
\[
\beta_{\mathrm{PCPWA}}(\tau)
=
\left[
\Phi^{-1}\!\left(1-\frac{\delta}{2\,M_{\mathrm{PCPWA}}(\tau,\mathcal P)}\right)
\right]^2,
\qquad
\gamma_{\mathrm{PCPWA}}(\tau)=(L_f+L_{\nu_N})\tau+\sqrt{\beta_{\mathrm{PCPWA}}(\tau)}\,\omega_{\sigma_N}(\tau).
\]
\end{cor}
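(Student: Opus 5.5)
The plan is to copy the proof of Corollary~\ref{cor:unif_bound_gp}, with the pseudometric $D$ playing the role of $W_1$. There are two steps. The first is a probabilistic statement on a finite net, obtained from Gaussian posterior tails and a union bound. The second is a deterministic extension of that statement to all of $\mathcal P$, and for this we re-run the argument of Theorem~\ref{thm:unif_bound} in the metric $D$. That argument only uses the triangle inequality, the Lipschitz or modulus-of-continuity bounds, and the existence of a $\tau$-net, so it never needs the specific form of $W_1$.

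\textbf{Net step.} Fix a $\tau$-net $\{\bar\mu_1,\dots,\bar\mu_M\}$ of $(\mathcal P,D)$ with $M=M_{\mathrm{PCPWA}}(\tau,\mathcal P)$. The model is well specified: the prior is $f\sim\mathcal{GP}(0,k_{\mathrm{PCPWA}})$, the design is fixed, and the noise is Gaussian with variance $\sigma_*^2$. Hence, conditional on $\mathcal D_N$, each $f(\bar\mu_j)$ is Gaussian with mean $\nu_N(\bar\mu_j)$ and variance $\sigma_N^2(\bar\mu_j)$. Validity of this Gaussian conditioning needs $k_{\mathrm{PCPWA}}$ to be positive definite, which the paper established by pullback and product closure in the validated 1D settings.

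Then $P(|f(\bar\mu_j)-\nu_N(\bar\mu_j)|>\sqrt{\beta}\,\sigma_N(\bar\mu_j)\mid\mathcal D_N)=2(1-\Phi(\sqrt\beta))=\delta/M$ with $\beta=\beta_{\mathrm{PCPWA}}(\tau)$. If $\sigma_N(\bar\mu_j)=0$ the event is trivial. A union bound over the $M$ net points gives posterior probability at least $1-\delta$ that $|f(\bar\mu_j)-\nu_N(\bar\mu_j)|\le\sqrt\beta\,\sigma_N(\bar\mu_j)$ holds for all $j$ simultaneously.

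\textbf{Extension step.} On that event, for any $\mu\in\mathcal P$ choose $j$ with $D(\mu,\bar\mu_j)\le\tau$ and decompose
\[
|f(\mu)-\nu_N(\mu)|\le|f(\mu)-f(\bar\mu_j)|+|f(\bar\mu_j)-\nu_N(\bar\mu_j)|+|\nu_N(\bar\mu_j)-\nu_N(\mu)|.
\]
The three terms are bounded as follows.
\begin{itemize}
\item The first term is at most $L_f\tau$, by the Lipschitz assumption on $f$ with respect to $D$.
\item The second term is at most $\sqrt\beta\,\sigma_N(\bar\mu_j)$, by the net step. The modulus of continuity with respect to $D$ gives $\sigma_N(\bar\mu_j)\le\sigma_N(\mu)+\omega_{\sigma_N}(\tau)$, so this term is at most $\sqrt\beta\,\sigma_N(\mu)+\sqrt\beta\,\omega_{\sigma_N}(\tau)$.
\item The third term is at most $L_{\nu_N}\tau$. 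Indeed, $\nu_N(\mu)=\sum_i\alpha_i k(\mu,\mu_i)$, so $|\nu_N(\mu)-\nu_N(\bar\mu_j)|\le\sum_i|\alpha_i|\,L_k D(\mu,\bar\mu_j)\le N\|\alpha\|_\infty L_k\tau$, using that $k_{\mathrm{PCPWA}}$ is $L_k$-Lipschitz in its first argument with respect to $D$.
\end{itemize}
Summing the three bounds gives exactly $\sqrt{\beta_{\mathrm{PCPWA}}(\tau)}\,\sigma_N(\mu)+\gamma_{\mathrm{PCPWA}}(\tau)$.

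\textbf{Expected obstacle.} The main care point is the measure-theoretic bookkeeping: the high-probability event is defined only through the finitely many net values, and the extension is then a pure deterministic implication on that event. This avoids any measurability issue for the supremum over $\mathcal P$. One should also check that $L_{\nu_N}$ is defined with $L_k$ taken relative to $D$, and note that $D$ need only be a pseudometric, since the triangle inequality is never used on $D$ itself.
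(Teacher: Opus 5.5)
Your proposal is correct and follows the paper's proof essentially step for step. You take a union bound over a $\tau$-net of $(\mathcal P,D)$ using the Gaussian posterior tails with $\sqrt{\beta_{\mathrm{PCPWA}}(\tau)}=\Phi^{-1}(1-\delta/(2M_{\mathrm{PCPWA}}(\tau,\mathcal P)))$, then apply the deterministic net-extension of Theorem~\ref{thm:unif_bound} with $D$ in place of $W_1$ and $B=\sqrt{\beta_{\mathrm{PCPWA}}(\tau)}$; the only difference is that you write out the three-term extension inline instead of citing the theorem, which usefully makes explicit that $L_{\nu_N}$ is computed with $L_k$ taken relative to $D$.
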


For $p>1$, it becomes more complicated since different projected marginals can have different covering numbers in the kernel construction \eqref{eq:ker_GWp-generic}. We therefore restrict the formal theory here to the $p=1$ setting and leave sharper high-dimensional covering estimates for future work.

The preceding corollaries are net-based posterior band statements analogous in spirit to the Wasserstein-space analysis of \citet{meunier2022slicedwasserstein}, but here the finite-net step is made explicit and the deterministic extension from the net to the full input class is separated from the model-based Gaussian tail argument. This immediately implies the following proposition, which shows how the band width can prevent the undercoverage highlighted in Proposition~\ref{ref:descriptive_prop}.

\begin{prop}\label{prop:wgp_bound}
Assume the posterior event in Corollary~\ref{cor:unif_bound_gp} holds with parameters $(\tau,\delta)$, i.e., conditional on $\mathcal D_N$,
\[
|f(\mu)-\nu_N(\mu)| \le \sqrt{\beta(\tau)}\,\sigma_N(\mu) + \gamma(\tau)
\quad\text{for all }\mu\in\mathcal P
\]
with posterior probability at least $1-\delta$. Fix $\alpha\in(0,1)$ and $z:=z_{1-\alpha/2}$. If
\[
z > \sqrt{\beta(\tau)}
\qquad\text{and}\qquad
\sigma_N(\mu_X) \ge \frac{\gamma(\tau)}{z-\sqrt{\beta(\tau)}},
\]
then
\[
\mathbb{P}\!\left( f(\mu_X) \in
\bigl[\nu_N(\mu_X)\pm z\,\sigma_N(\mu_X)\bigr]
\,\middle|\,\mathcal D_N
\right)\ge 1-\delta.
\]
\end{prop}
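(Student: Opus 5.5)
The plan is to show that, on the high-probability event supplied by Corollary~\ref{cor:unif_bound_gp}, the uniform band evaluated at $\mu_X$ fits inside the interval $[\nu_N(\mu_X)\pm z\,\sigma_N(\mu_X)]$. The probability bound then follows by monotonicity of (posterior) probability; no further probabilistic argument is needed.

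First I define the posterior event
\[
E:=\Bigl\{\,|f(\mu)-\nu_N(\mu)|\le \sqrt{\beta(\tau)}\,\sigma_N(\mu)+\gamma(\tau)\ \text{for all }\mu\in\mathcal P\Bigr\}.
\]
By hypothesis (equivalently, by Corollary~\ref{cor:unif_bound_gp}), $\mathbb P(E\mid\mathcal D_N)\ge 1-\delta$. I will take $\mu_X\in\mathcal P$ as implicit in the statement, so that the uniform band can be evaluated at $\mu=\mu_X$. Since $\nu_N$ and $\sigma_N$ are deterministic given $\mathcal D_N$, the only randomness is in $f$.

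The key algebraic step uses $z>\sqrt{\beta(\tau)}$. Rearranging the second hypothesis gives $(z-\sqrt{\beta(\tau)})\,\sigma_N(\mu_X)\ge\gamma(\tau)$, that is,
\[
\sqrt{\beta(\tau)}\,\sigma_N(\mu_X)+\gamma(\tau)\le z\,\sigma_N(\mu_X).
\]
Hence on $E$ we have $|f(\mu_X)-\nu_N(\mu_X)|\le z\,\sigma_N(\mu_X)$. So $E\subseteq\{f(\mu_X)\in[\nu_N(\mu_X)\pm z\sigma_N(\mu_X)]\}$, and monotonicity of conditional probability gives the claimed bound $\ge 1-\delta$.

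The only delicate point is bookkeeping rather than analysis. I need $\gamma(\tau)\ge 0$, which holds because all of its constituents are nonnegative, so the threshold condition makes sense. I also need $\mu_X$ to be a fixed, $\mathcal D_N$-measurable element of $\mathcal P$, so that $\sigma_N(\mu_X)$ is not random under the posterior. If $\mu_X$ were random and independent of $f$, the same inclusion would hold pointwise, and integrating would again give the bound $\ge 1-\delta$. I will state this explicitly.
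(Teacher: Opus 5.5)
Your argument is correct and matches the paper's. The paper says this proposition follows immediately from Corollary~\ref{cor:unif_bound_gp}, and its written proof of the sliced analogue, Proposition~\ref{prop:swgp_bound}, is your event-inclusion argument: evaluate the uniform band at $\mu_X$, use $z>\sqrt{\beta(\tau)}$ to rewrite the variance lower bound as $\sqrt{\beta(\tau)}\,\sigma_N(\mu_X)+\gamma(\tau)\le z\,\sigma_N(\mu_X)$, then conclude by monotonicity of posterior probability. Your extra remarks (that $\mu_X\in\mathcal P$ is fixed given $\mathcal D_N$, and that $\gamma(\tau)\ge 0$) are implicit in the paper and harmless; the nonnegativity of $\gamma(\tau)$ is not actually needed for the rearrangement.
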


This leads to larger posterior variance where input uncertainty is high, increasing the width of credible intervals appropriately. The statement is Bayesian: under the well-specified GP model, the posterior credible interval is conservative whenever the variance lower bound above holds.

Sliced Wasserstein kernels \eqref{eq:slice_kernel} are positive definite, but in practice are implemented
using a finite number of random projections, which introduces Monte Carlo approximation error.
By contrast, PWA/\PCPWA are deterministic projections and avoid this additional randomness.

\begin{prop}\label{prop:swgp_bound}
Fix a direction $u\in\mathbb S^{d-1}$ and assume the analogue of Corollary~\ref{cor:unif_bound_gp} holds for the 1D projected model with parameters $(\tau,\delta)$: conditional on $\mathcal D_N$,
\[
|f(\mu)-\nu_N^u(\mu)| \le \sqrt{\beta_u(\tau)}\,\sigma_N^u(\mu) + \gamma_u(\tau)
\quad\text{for all }\mu\in\mathcal P
\]
with posterior probability at least $1-\delta$. Fix $\alpha\in(0,1)$ and $z:=z_{1-\alpha/2}$. If
\[
z > \sqrt{\beta_u(\tau)}
\qquad\text{and}\qquad
\sigma_N^u(\mu_X) \ge \frac{\gamma_u(\tau)}{z-\sqrt{\beta_u(\tau)}},
\]
then
\[
\mathbb{P}\!\left( f(\mu_X) \in
\bigl[\nu_N^u(\mu_X)\pm z\,\sigma_N^u(\mu_X)\bigr]
\,\middle|\,\mathcal D_N
\right)\ge 1-\delta.
\]
\end{prop}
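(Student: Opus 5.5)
The argument is a pointwise event inclusion followed by monotonicity of probability, in the same way Proposition~\ref{prop:wgp_bound} follows from Corollary~\ref{cor:unif_bound_gp}. The direction $u$ is fixed, so nothing in the argument depends on how $u$ was chosen. Any Monte Carlo aspect of slicing is absent because we condition on a single projection. Let
\[
E_u:=\Bigl\{\,|f(\mu)-\nu_N^u(\mu)|\le \sqrt{\beta_u(\tau)}\,\sigma_N^u(\mu)+\gamma_u(\tau)\ \text{for all }\mu\in\mathcal P\Bigr\},
\]
which by hypothesis satisfies $\mathbb P(E_u\mid\mathcal D_N)\ge 1-\delta$. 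We assume, as is implicit in the statement, that $\mu_X\in\mathcal P$, so the uniform band applies at $\mu=\mu_X$.

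\textbf{Step 1: pointwise reduction.} On $E_u$, specializing to $\mu=\mu_X$ gives
\[
|f(\mu_X)-\nu_N^u(\mu_X)|\le \sqrt{\beta_u(\tau)}\,\sigma_N^u(\mu_X)+\gamma_u(\tau).
\]

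\textbf{Step 2: absorb $\gamma_u$ into the interval width.} The condition $z>\sqrt{\beta_u(\tau)}$ makes $z-\sqrt{\beta_u(\tau)}$ positive. Multiplying the variance lower bound by this positive quantity gives $\gamma_u(\tau)\le (z-\sqrt{\beta_u(\tau)})\,\sigma_N^u(\mu_X)$. Substituting into Step 1 yields
\[
|f(\mu_X)-\nu_N^u(\mu_X)|\le z\,\sigma_N^u(\mu_X),
\]
that is, $f(\mu_X)$ lies in $[\nu_N^u(\mu_X)\pm z\,\sigma_N^u(\mu_X)]$. This shows that $E_u$ is contained in the coverage event.

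\textbf{Step 3: conclude.} By monotonicity of the conditional posterior probability, the coverage event has probability at least $\mathbb P(E_u\mid\mathcal D_N)\ge 1-\delta$.

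The only point needing care, and the one I expect to be the main obstacle, is measurability and determinism of the quantities involved. Conditionally on $\mathcal D_N$, the functions $\nu_N^u$, $\sigma_N^u$, $\beta_u$ and $\gamma_u$ are all deterministic, and only $f$ is random under the posterior. Hence the variance condition is a deterministic condition that either holds or fails, and the inclusion in Step 2 holds pathwise. I would state this explicitly and not otherwise belabor it. The sign condition $z>\sqrt{\beta_u(\tau)}$ is what keeps the division in Step 2 from reversing the inequality.
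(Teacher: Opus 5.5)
Your proposal is correct and follows the paper's proof essentially step for step: the paper also defines the posterior event $\mathcal E_u$, evaluates the uniform band at $\mu=\mu_X$, uses $z>\sqrt{\beta_u(\tau)}$ together with the variance lower bound to get $\sqrt{\beta_u(\tau)}\,\sigma_N^u(\mu_X)+\gamma_u(\tau)\le z\,\sigma_N^u(\mu_X)$, and concludes by monotonicity. Your extra remarks are harmless and slightly sharpen the paper's presentation: the implicit assumption $\mu_X\in\mathcal P$, and the fact that $\nu_N^u,\sigma_N^u,\beta_u,\gamma_u$ are deterministic given $\mathcal D_N$.
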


\section{Experiments and Applications}

\label{sec:experiments}

Practitioners who model input uncertainty with Gaussian summaries often use a
covariance-only input map $x\mapsto \Sigma_x$, so that each input location
$x$ is represented as $X\sim\mathcal N(\mu_x,\Sigma_x)$ and only the covariance
$\Sigma_x$ enters the kernel \citep{morenomunoz2018heterogeneous}. 
This approach is efficient when noise is well‐approximated by local Gaussian perturbations but cannot represent skew, multimodality, or complex support geometry. Though we noticed the recent important work \citep{song2023frechet} tackles the input uncertainty by modeling in the space of covariates, it is not completely clear to us how their low-rank linear models can be compared to non-linear models and thus we did not include their method in the comparison. 
We cannot use histogram discretizations like the previous Wasserstein GP literature \citep{meunier2022slicedwasserstein,bachoc2017gaussian}, since the bin-width selection has major effect on the subsequent model. 
We evaluate our proposed \PWAGP (and \PCPWA where noted) on synthetic and real datasets using: (1) RMSE for posterior‐mean prediction accuracy; (2) CRPS \citep{arnold2024decompositions},
which jointly assesses predictive precision and sharpness, and (3) the probability coverage, which measures how well the posterior uncertainty covers the measured test observations.
\Cref{sec:eval-mtrics} contains the definitions of these metrics.
\subsection{Simulated distributional regression experiments}
\label{sec:simulated}

We study several synthetic distributional regression problems designed to
probe errors-in-variables (EIV) behavior and uncertainty quantification.
In all cases, each covariate is a finite cloud
$U_i=\{u_{ij}\}_{j=1}^{n_i}\subset\mathbb R^d$ with associated response
$Y_i$, and the task is to predict $Y$ for new clouds. We report test RMSE,
empirical coverage of nominal $90\%$ predictive intervals, and CRPS for
Regular GP on empirical means, Aggregated GP on individual samples, Wasserstein
GPs (\WGP, \SWGP, \PWAGP, \PCPWA), uncertain-input GP \UIGP \citet{girard2002gaussian}, and KME/MMD
distributional GPs \citep{meunier2022slicedwasserstein}.

\paragraph{1D scenarios.}
We consider three one-dimensional settings with Gaussian output noise
$\eta_i\sim\mathcal N(0,0.05^2)$.
(i) \textbf{1D-EIV}: latent covariates $x_i$ are evenly spaced on
$[0.05,0.95]$. We observe $U_i=\{x_i+\varepsilon_{ij}\}_{j=1}^{n_i}$ with
heteroscedastic Gaussian measurement noise $\varepsilon_{ij}\sim
\mathcal N(0,\sigma_i^2)$, where $\sigma_i^2$ increases with $x_i$, and set
$Y_i=f(x_i)+\eta_i$ for a nonlinear oscillatory $f$. This is a canonical
errors-in-variables problem. (ii) \textbf{1D-Var}:
$U_i$ consists of $u_{ij}\sim\mathcal N(\mu_i,\sigma_i^2)$ with varying
variances, and the response depends on both mean and spread,
$Y_i=\sin(2\pi\mu_i)+0.5\sigma_i^2+\eta_i$. (iii) \textbf{1D-Skew}:
inputs are log-normal, $u_{ij}\sim\exp(\mathcal N(m_i,s_i^2))$, and the
response depends on an inter-quantile range and location,
$Y_i=[Q_{0.8}(U_i)-Q_{0.2}(U_i)]+0.3\sin(m_i)+\eta_i$, where $Q_q(U_i)$ denotes the $q$-quantile of the input distribution (or,
equivalently in the simulations, of the empirical cloud $U_i$). 
1D-Var and 1D-Skew
thus require sensitivity to higher-order distributional features.

\paragraph{2D scenarios.}
We next consider two two-dimensional problems. (iv) \textbf{2D-mean}
(Gaussian location functional): for each $i$ we draw
$\mu_i\in[0.1,1]$ and sample $u_{ij}\sim\mathcal N(\mu_i\mathbf 1_2,
\mathrm{diag}(\sigma_{i1}^2,\sigma_{i2}^2))$ with moderate variances.
The response is a smooth function of the mean of $U_i$, averaged over
coordinates,
\[
Y_i = \frac{1}{n_i}\sum_{j=1}^{n_i}\bigl[\sin(u_{ij})+2e^{u_{ij}}\bigr]
+ \eta_i,
\]
so the problem is essentially a Gaussian location model.
(v) \textbf{2D-aniso-PC} (anisotropic rotated subspace): latent scalars
$z_i\in[0.05,0.95]$ generate samples in a rotated basis
$(z_{ij}^{\parallel},z_{ij}^{\perp})$ via
$z_{ij}^{\parallel}=z_i+\epsilon_{ij}^{\parallel}$,
$z_{ij}^{\perp}=\epsilon_{ij}^{\perp}$ with
$\epsilon_{ij}^{\parallel}\sim\mathcal N(0,\sigma_\parallel^2(z_i))$ and
$\epsilon_{ij}^{\perp}\sim\mathcal N(0,\sigma_\perp^2)$, where
$\sigma_\parallel^2(z_i)$ increases with $z_i$ and $\sigma_\perp^2$ is
small. A fixed $45^\circ$ rotation $R$ maps
$u_{ij}=R[z_{ij}^{\parallel},z_{ij}^{\perp}]^\top$, and the response
depends only on $z_i$,
$Y_i=\sin(4\pi z_i)+0.5 z_i+\eta_i$. Here the intrinsic variation lies
along a rotated one-dimensional subspace, so PCA-based kernels such as
\PCPWA are expected to help.

\paragraph{High-dimensional scenarios.}
Finally, we construct high-dimensional problems from the Ackley function.
(vi) \textbf{HD-Ackley-5D} and (vii) \textbf{HD-Ackley-10D}: latent
covariates $x_i\in[-2,2]^d$ with $d\in\{5,10\}$ are drawn uniformly,
we observe noisy clouds $u_{ij}=x_i+\varepsilon_{ij}$ with
$\varepsilon_{ij}\sim\mathcal N(0,0.1^2 I_d)$, and set
$Y_i=f_{\text{Ackley}}(x_i)+\eta_i$. These scenarios test how methods scale
with dimension in the presence of input noise and nontrivial curvature.

Table~\ref{tab:simulated} reports the resulting RMSE, coverage, and CRPS.
In 1D-EIV, Regular GP achieves RMSE $0.309$ but under-covers (65\%),
Aggregated GP is both inaccurate and grossly over-confident (5\% coverage),
while \WGP, \PWAGP, \PCPWA, and \SWGP all attain similar RMSE
($\approx0.30$) with coverage $\approx0.97$ and the lowest CRPS
($\approx0.17$), matching our EIV coverage analysis. In 1D-Var and
1D-Skew, mean- or moment-based baselines (Regular, UI-GP, KME/MMD) either
lose accuracy or under-cover, whereas Wasserstein kernels maintain
competitive RMSE and near-nominal coverage. In 1D-Var the Regular GP has
the best RMSE/CRPS, but the Wasserstein methods are within about 10-15\%
in RMSE; in 1D-Skew, MMD-GP attains the lowest RMSE and CRPS but with
coverage 0.58, whereas the Wasserstein models trade a slightly higher RMSE
for better-calibrated intervals.

In the benign 2D-mean scenario, uncertain-input and KME/MMD GPs-whose
assumptions match the Gaussian location structure-achieve the smallest RMSE
(0.0004-0.0026) with near-nominal coverage and CRPS, while Wasserstein
kernels remain reasonable but conservative (e.g. PWA-GP RMSE 0.0462,
coverage 1.0). This shows that our approach does not artificially dominate
when the problem is effectively Euclidean since WGP becomes regular GP when $\sigma_{i1}=\sigma_{1}$ and $\sigma_{i2}=\sigma_{2}$.
In contrast, in the anisotropic
2D-aniso-PC setting, methods based on Gaussian moments or mean embeddings
deteriorate markedly (UI-GP and KME-GP RMSE $\approx0.69$, coverage 0.15),
whereas \WGP, PWA-GP, \PCPWA, and MMD-GP all keep RMSE around 0.20 and
coverage between 0.65 and 0.98. \PCPWA in particular combines good accuracy
(RMSE 0.2036, very close to the best 0.1972) with high coverage (0.95) and
low CRPS, benefiting from aligning the kernel with principal directions of
the empirical clouds.

The high-dimensional Ackley experiments further probe robustness under
complex geometry. In 5D, Wasserstein kernels (especially PWA-GP and \WGP)
achieve lower RMSE and CRPS than Regular and UI-GP while maintaining higher
coverage, and Aggregated GP again fails badly. In 10D, all methods become
challenged, and MMD-GP attains the best RMSE/CRPS with reasonably high
coverage; nevertheless, Wasserstein kernels remain very close in RMSE (e.g. PWA-GP
and \WGP within roughly $10\%$ of the best) and retain the second best coverage.

Across all scenarios, sliced-\WGP (\SWGP) is competitive in 1D but its
optimization is numerically unstable in several higher-dimensional cases,
as indicated by the extremely large RMSE/CRPS values in Table~\ref{tab:simulated}.
Overall, PWA-GP and \PCPWA consistently deliver well-calibrated predictive
uncertainty in EIV settings while retaining competitive RMSE, and they
remain robust as the input distributions become heteroscedastic, skewed,
anisotropic, and high-dimensional, even when they are not the single
best-performing method in raw RMSE.

\begin{table}[t!]
\centering

\label{tab:simulated}
\resizebox{\textwidth}{!}{%
\begin{tabular}{llrrrrrrrrr}
\toprule
Scenario & Metric
& Reg & Agg & \WGP & \SWGP & PWA & \PCPWA & UI & KME & MMD \\
\midrule
2D-mean
  & RMSE & 8.500e-03 & 1.061e-01 & 5.280e-02 & 1.015e+07 & 4.620e-02 & 1.615e-01 & 4.000e-04 & 2.100e-03 & 2.600e-03 \\
  & Cov. & 4.250e-01 & 1.000e-01 & \textbf{1.000e+00} & 0.000e+00 & 6.250e-01 & \textbf{9.750e-01} & \textbf{1.000e+00} & \textbf{1.000e+00} & \textbf{9.750e-01} \\
  & CRPS & 5.100e-03 & 5.940e-02 & 4.540e-02 & 7.752e+06 & 3.970e-02 & 1.108e-01 & 2.000e-03 & 8.900e-03 & 1.010e-02 \\
\midrule
1D-EIV
  & RMSE & 3.090e-01 & 5.712e-01 & 3.014e-01 & 2.995e-01 & 3.014e-01 & 3.014e-01 & 3.462e-01 & 4.179e-01 & 3.029e-01 \\
  & Cov. & 6.500e-01 & 5.000e-02 & \textbf{9.670e-01} & \textbf{9.670e-01} & \textbf{9.670e-01} & \textbf{9.670e-01} & 5.670e-01 & 5.330e-01 & 6.670e-01 \\
  & CRPS & 1.848e-01 & 4.791e-01 & 1.716e-01 & 1.722e-01 & 1.716e-01 & 1.716e-01 & 2.179e-01 & 2.844e-01 & 1.775e-01 \\
\midrule
1D-Var
  & RMSE & 1.612e-01 & 4.810e-01 & 1.849e-01 & 1.819e-01 & 1.849e-01 & 1.849e-01 & 2.016e-01 & 2.697e-01 & 1.743e-01 \\
  & Cov. & 7.670e-01 & 0.000e+00 & \textbf{9.830e-01} & \textbf{9.830e-01} & \textbf{9.830e-01} & \textbf{9.830e-01} & 6.330e-01 & 6.170e-01 & 7.670e-01 \\
  & CRPS & 9.150e-02 & 3.648e-01 & 1.070e-01 & 1.058e-01 & 1.070e-01 & 1.070e-01 & 1.135e-01 & 1.552e-01 & 9.900e-02 \\
\midrule
1D-Skew
  & RMSE & 1.768e-01 & 4.779e-01 & 1.629e-01 & 1.633e-01 & 1.629e-01 & 1.629e-01 & 1.762e-01 & 1.892e-01 & 1.406e-01 \\
  & Cov. & 4.330e-01 & 0.000e+00 & \textbf{9.500e-01} & \textbf{9.670e-01} & \textbf{9.500e-01} & \textbf{9.500e-01} & 4.170e-01 & 4.500e-01 & 5.830e-01 \\
  & CRPS & 1.075e-01 & 3.618e-01 & 9.730e-02 & 9.750e-02 & 9.730e-02 & 9.730e-02 & 1.057e-01 & 1.152e-01 & 8.000e-02 \\
\midrule
2D-aniso-PC
  & RMSE & 1.972e-01 & 5.435e-01 & 2.086e-01 & 2.541e+06 & 2.130e-01 & 2.036e-01 & 6.893e-01 & 6.893e-01 & 2.009e-01 \\
  & Cov. & 6.750e-01 & 2.500e-02 & \textbf{9.750e-01} & 0.000e+00 & \textbf{9.500e-01} & \textbf{9.500e-01} & 1.500e-01 & 1.500e-01 & 6.500e-01 \\
  & CRPS & 1.086e-01 & 3.846e-01 & 1.252e-01 & 2.581e+06 & 1.211e-01 & 1.124e-01 & 5.528e-01 & 5.528e-01 & 1.098e-01 \\
\midrule
HD-Ackley-5D
  & RMSE & 6.071e-01 & 2.483e+00 & 5.413e-01 & 4.080e+08 & 5.190e-01 & 5.562e-01 & 6.062e-01 & 6.574e-01 & 5.080e-01 \\
  & Cov. & 7.750e-01 & 2.500e-02 & 9.000e-01 & 0.000e+00 & 8.870e-01 & \textbf{9.630e-01} & 7.750e-01 & 7.750e-01 & 7.870e-01 \\
  & CRPS & 3.501e-01 & 1.879e+00 & 3.302e-01 & 3.081e+07 & 2.715e-01 & 3.147e-01 & 3.495e-01 & 3.607e-01 & 2.969e-01 \\
\midrule
HD-Ackley-10D
  & RMSE & 5.925e-01 & 2.487e+00 & 5.873e-01 & 5.888e+00 & 6.003e-01 & 6.110e-01 & 5.962e-01 & 6.112e-01 & 5.450e-01 \\
  & Cov. & 2.630e-01 & 7.500e-02 & 6.880e-01 & 0.000e+00 & 7.000e-01 & 7.620e-01 & 6.750e-01 & 6.250e-01 & 7.630e-01 \\
  & CRPS & 4.138e-01 & 1.887e+00 & 3.492e-01 & 2.431e+00 & 3.567e-01 & 3.718e-01 & 3.470e-01 & 3.552e-01 & 3.309e-01 \\
\bottomrule
\end{tabular}}
\caption{Simulated distributional regression experiments. For each scenario
we report test RMSE, empirical coverage of nominal $90\%$ predictive
intervals (Cov.), and CRPS. ``Reg'' = Regular GP on empirical means,
``Agg'' = Aggregated GP, ``UI'' = uncertain-input GP.
Extremely large \SWGP values indicate numerical divergence. Values of Cov.\ above $0.9$ are shown in \textbf{bold}.}
\end{table}

\subsection{Calibration of particle accelerators}
\label{sec:online_examples}

Particle accelerators play a central role in scientific discovery and
industrial applications. In a typical machine, focusing elements such as
quadrupole magnets guide the charged particle beam and preserve beam
quality, measured by the transverse emittance at the end of the accelerator.
Here we study the relationship between the strengths of five quadrupole
magnets in a linac section and the resulting beam emittance.

For each of the $N=500$ machine settings we observe a five-dimensional vector
of magnet strengths and two emittances (horizontal and vertical). Each magnet
is specified with a relative tolerance ($5\%$ or $0.5\%$) around its nominal
strength \cite{tao2018}, so the “input’’ available to a regression model is a noisy proxy
$U_i$ of an unknown true setting $X_i$-a classical errors-in-variables
situation. We therefore model each setting as
\[
U_i \sim \mathcal N(\mu_i,\Sigma_i),
\qquad
\Sigma_i = \mathrm{diag}(\mu_i \odot \varepsilon),
\]
where $\mu_i\in\mathbb R^5$ is the magnet vector for shot $i$ and
$\varepsilon\in\{0.05,0.005\}$ is the relative error. From each
$\mathcal N(\mu_i,\Sigma_i)$ we draw a small cloud of samples to represent
$U_i$, and define the scalar response
$Y_i = (\text{horizontal emittance}) \times (\text{vertical emittance}) /
10^{-12}$. We use an 80/20 train-test split.

We fit both Euclidean and
distributional GP models. Euclidean baselines are standard GPs on empirical
means $m_i=\mathbb E[U_i]$ with RBF, Matérn $3/2$, Matérn $5/2$ and
exponential kernels. Distributional competitors are our Wasserstein GPs
(PWA-GP, \WGP, \PCPWA) that act on the sample clouds using separable, full and
PCA-based Wasserstein kernels, together with the uncertain-input GP
\UIGP of \citet{girard2002gaussian}.

Table~\ref{tab:accel} reports the results for both $5\%$ and $0.5\%$
relative errors. All methods (except \SWGP) nearly interpolate the training
data (train RMSE $<2\times10^{-2}$, not shown). On the test set, UI-GP
consistently achieves the best accuracy and calibration: RMSE $0.1340$ with
coverage $0.86$ and CRPS $0.0557$ at $5\%$ error, improving to RMSE $0.1294$,
coverage $0.92$ and CRPS $0.0441$ at $0.5\%$ error-very close to the nominal
$90\%$ target. Euclidean GPs on means have larger RMSE ($0.22$-$0.24$ at
$5\%$, $0.22$-$0.28$ at $0.5\%$) and higher CRPS, but still decent coverage
($0.91$-$0.98$), reflecting that in this nearly Gaussian, low-dimensional
setting a moment-based EIV correction already works well.

Our Wasserstein GPs (PWA-GP, \WGP, \PCPWA) are broadly comparable to the
mean-based Euclidean GPs on this dataset: they achieve similar qualitative
behavior and conservative coverage ($0.96$-$0.98$), but do not provide a
systematic improvement in point accuracy or CRPS. This is consistent with
the structure of the input uncertainty here, each 
$U_i$ is close to Gaussian
and its dispersion is largely determined by the nominal setting $\mu_i$ via a
fixed relative tolerance. Therefore, much of the useful information is already
captured by the empirical mean representation. The \SWGP\ is numerically
unstable for both error levels, with diverging RMSE and CRPS and zero
coverage, and is therefore not a viable competitor on this dataset.

Overall, the accelerator results align with our simulations results in Table \ref{tab:simulated} (1D-EIV):
when the input noise is close to Gaussian and the response depends mainly on
low-order moments, mean-based and uncertain-input GPs can match (and in this
case outperform) Wasserstein GPs, whereas the latter show clear benefits in
the more non-Gaussian and anisotropic EIV scenarios of
Section~\ref{sec:simulated}. 

\begin{table}[t!]
\centering

\label{tab:accel}
\resizebox{.85\textwidth}{!}{%
\begin{tabular}{lcccccc}
\toprule
& \multicolumn{3}{c}{5\% error} & \multicolumn{3}{c}{0.5\% error} \\
Method
& RMSE & Cov. & CRPS
& RMSE & Cov. & CRPS \\
\midrule
RBF GP            & 0.2186 & 0.910 & 0.0883 & 0.2177 & 0.920 & 0.0806 \\
Matérn $3/2$ GP   & 0.2351 & 0.930 & 0.1010 & 0.2378 & 0.930 & 0.0997 \\
Matérn $5/2$ GP   & 0.2228 & 0.940 & 0.0912 & 0.2242 & 0.940 & 0.0892 \\
Exp.\ GP          & 0.2432 & 0.980 & 0.1380 & 0.2847 & 0.980 & 0.1392 \\
PWA-GP            & 0.2798 & 0.980 & 0.1690 & 0.3373 & 0.980 & 0.1681 \\
\SWGP              & 2.7234$\times 10^{6}$ & 0.000 & 2.1404$\times 10^{6}$
                  & 3.1142$\times 10^{6}$ & 0.000 & 3.9275$\times 10^{6}$ \\
\WGP               & 0.3169 & 0.970 & 0.1949 & 0.2806 & 0.970 & 0.1401 \\
\PCPWA             & 0.3072 & 0.960 & 0.1682 & 0.2828 & 0.970 & 0.1430 \\
Uncertain-input GP & 0.1340 & 0.860 & 0.0557 & 0.1294 & 0.920 & 0.0441 \\
\bottomrule
\end{tabular}}
\caption{Accelerator dataset: test RMSE, empirical coverage of nominal
$90\%$ predictive intervals (Cov.), and CRPS for two relative error levels.
Euclidean GPs use empirical means as inputs; distributional GPs act on full
input distributions.}
\end{table}

\subsection{Noisy Trajectories from NOAA Drifters}

This experiment compares GP-RBF and \WGP-RBF in predicting mean temperatures, emphasizing input uncertainty. We evaluate models using RMSE and CRPS metrics on NOAA's Global Drifter Program dataset, containing hourly drifter trajectories and sea surface temperatures (SST) observations  \citep{elipot2016global,elipot2022dataset}. 
While the SST measurements are collected over space at a high temporal frequency, scientific interest is often focused on the time-aggregated behavior of SSTs to understand large-scale modes of ocean variability \citep{Philander1983, Mantua2002, Schlesinger1994, Thompson2000}. Therefore, we consider these trajectories as distributional input models, where the distribution of inputs represents the collection of geospatial positions corresponding to the hourly measurements. 
The data is preprocessed by selecting trajectories with low temperature variance ($Y_{\text{var\_temp}}$). Each collection of scattered points on a trajectory is then represented by a 2D normal distribution with empirical mean position 
and empirical covariance matrix. We have to emphasize that the trajectories of drifters (i.e., floating buoys on the oceanic surface) are highly noisy, self-intersecting and therefore usual functional data representation is not appropriate here. 
Therefore, we consider these trajectories as distributional input models, where both measurement noise and ocean variability-information GP-RBF ignores by using only point estimates. The response ($Y_{\text{mean\_temp}}$) is the mean SST. We train on one year's data (e.g., 1999) and test on another year (e.g., 1998), initially focusing on 100 trajectories with lowest variance to ensure valid scalar responses.


\begin{figure}
\centering 
\includegraphics[width=0.49\textwidth,height=3.75cm]{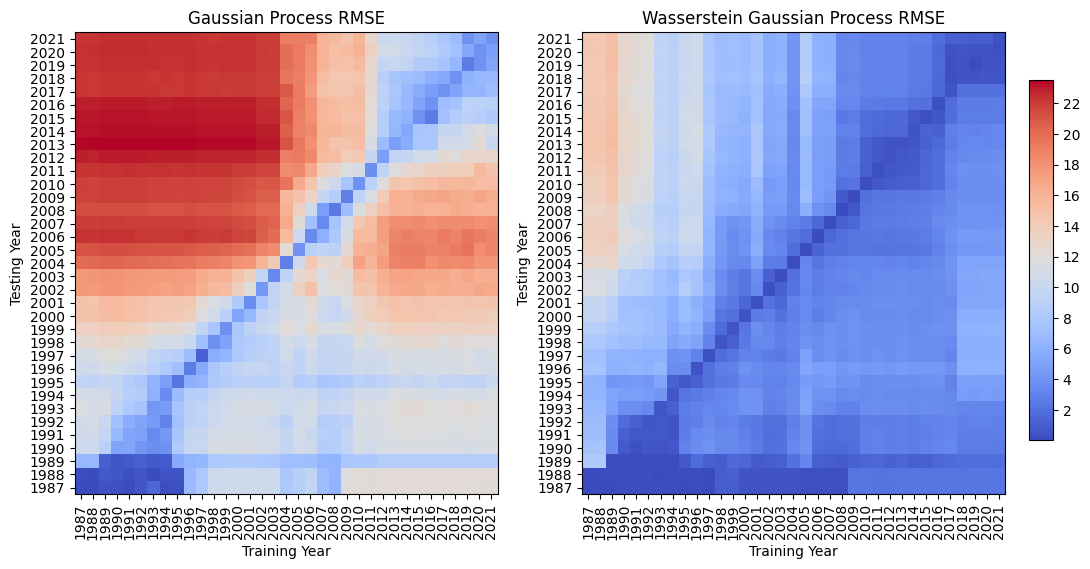}
\includegraphics[width=0.49\textwidth,height=3.75cm]{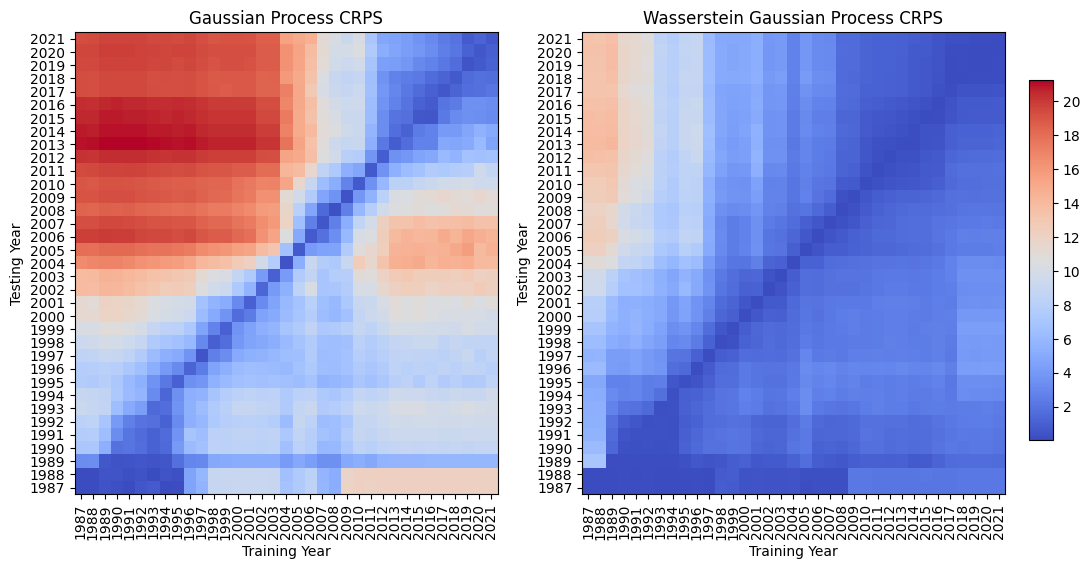}
\caption{RMSE (left) and CRPS (right) for GP-RBF vs.\ \WGP-RBF across training/testing years 1987-2022. GP-RBF degrades when trained on recent and tested on older years, indicating poor generalization under temporal shifts. \WGP-RBF remains robust, consistently achieving lower errors and better probabilistic calibration.}
\label{fig:noaagdp-2} 
\end{figure}

We then evaluated both models on every pair of training/testing years (1987-2022). The sliced \WGP takes much higher memory and is not considered here. 
Figure~\ref{fig:noaagdp-2} shows RMSE and CRPS heatmaps. GP-RBF suffers high errors off the diagonal-overfitting recent data and failing to extrapolate to earlier distributions. In contrast, \WGP-RBF maintains low RMSE and CRPS across all year combinations, thanks to its explicit treatment of input uncertainty. The diagonal structure reflects optimal performance when training and testing distributions match, but \WGP-RBF’s stronger off-diagonal performance underscores its superior transferability for applications with nonstationary input variability, such as climate monitoring.

\section{Discussion}
\label{sec:discussion}
We have argued that uncertainty quantification, rather than purely pointwise accuracy, should guide the design of GP models when inputs are measured with error. By lifting inputs to probability measures and endowing this space with Wasserstein geometry, PWA-based GPs provide a simple, deterministic alternative to latent-input and sampling-based EIV approaches. The theory now separates a deterministic net-extension bound from a model-based GP posterior statement: under Lipschitz and covering-number conditions, one obtains a uniform high-probability posterior band and a simple sufficient condition for conservative credible intervals. Our experiments confirm that these ideas are consistent with substantially better-calibrated predictive intervals on real data.
Our empirical findings suggest that classical Euclidean kernels are often overly confident when covariates are noisy, whereas Wasserstein-type kernels and PWA in particular produce credible intervals that track the true variability across different noise regimes and non-Gaussian input distributions. This makes \PWAGP especially appealing in scientific and engineering applications, such as accelerator calibration and oceanographic monitoring, where decisions are driven by the reliability of uncertainty quantification.

Future work includes extending the analysis to high-dimensional, non-Gaussian noise \citep{tekriwal2026ggmpsgeneralizedgaussianmixture} and non-Euclidean input spaces \citep{luo2026wavelet}, incorporating learned or problem-specific transport costs.

\section*{Acknowledgments}
This research was supported in part by the U.S. Department of Energy, Office of Science, Office of Advanced Scientific Computing Research's Applied Mathematics program under Contract No. DE-AC02-05CH11231 at Lawrence Berkeley National Laboratory.

\clearpage{}

\bibliographystyle{apalike}
\bibliography{TMLR}

\newpage
\appendix

\numberwithin{figure}{section} \numberwithin{table}{section}
\section{Evaluation metrics}
\label{sec:eval-mtrics}
We report three complementary metrics for predictive performance. Let
$\widehat m_i$ and $\widehat s_i$ denote the posterior mean and posterior
standard deviation for the $i$th test input, and let $y_i$ be the
corresponding test response.

First, the root mean squared error (RMSE) measures point-prediction accuracy:
\[
\operatorname{RMSE}
=
\left(
\frac{1}{n_{\mathrm{test}}}
\sum_{i=1}^{n_{\mathrm{test}}}
(\widehat m_i-y_i)^2
\right)^{1/2}.
\]

Second, for a nominal $100(1-\alpha)\%$ predictive interval
\[
C_i
=
\bigl[
\widehat m_i - z_{1-\alpha/2}\widehat s_i,\,
\widehat m_i + z_{1-\alpha/2}\widehat s_i
\bigr],
\]
the empirical coverage is
\[
\operatorname{Cov}_{1-\alpha}
=
\frac{1}{n_{\mathrm{test}}}
\sum_{i=1}^{n_{\mathrm{test}}}
\mathbf{1}\{y_i \in C_i\}.
\]
Coverage below the nominal level is called \emph{undercoverage} and indicates
that the predictive intervals are too narrow, i.e., the model is
overconfident.

Third, we report the continuous ranked probability score (CRPS), a proper
scoring rule for the full predictive distribution:
\[
\operatorname{CRPS}(F_i,y_i)
=
\int_{-\infty}^{\infty}
\bigl(F_i(t)-\mathbf{1}\{y_i \le t\}\bigr)^2\,dt,
\]
where $F_i$ is the predictive cumulative distribution function for the $i$th
test point. Lower CRPS is better; it rewards both calibration and sharpness.
\section{Proof for Proposition \ref{ref:descriptive_prop}}
\begin{proof}
With $f(x)=c+w^\top x$, we have
\[
Y = f(X+\varepsilon_X)+\varepsilon = f(X) + w^\top \varepsilon_X + \varepsilon.
\]
Thus $f(X)-Y = -(w^\top\varepsilon_X+\varepsilon)$ and conditional on $X$,
\[
w^\top\varepsilon_X+\varepsilon \sim \mathcal{N}(0,\;w^\top\Sigma_X w+\sigma^2).
\]
Therefore,
\[
\mathbb{P}(f(X)\in I_\alpha\mid X)
=
\mathbb{P}\left(|w^\top\varepsilon_X+\varepsilon|\le z_{1-\alpha/2}\sigma\right)
=
2\Phi\!\left(\frac{z_{1-\alpha/2}\sigma}{\sqrt{\sigma^2+w^\top\Sigma_X w}}\right)-1,
\]
which is strictly less than $1-\alpha$ whenever $w^\top\Sigma_X w>0$.
\end{proof}

\section{Proof for Theorem \ref{thm:unif_bound}}
\label{sec:proof_main_thm}

\begin{proof}
We first verify the Lipschitz continuity of $\nu_N$. Since
\[
\nu_N(\mu)=\sum_{i=1}^N \alpha_i k(\mu,\mu_i),
\]
and $k$ is $L_k$-Lipschitz in its first argument with respect to $W_1$,
\[
|\nu_N(\mu)-\nu_N(\mu')|
\le \sum_{i=1}^N |\alpha_i|\,|k(\mu,\mu_i)-k(\mu',\mu_i)|
\le L_k \Bigl(\sum_{i=1}^N |\alpha_i|\Bigr) W_1(\mu,\mu').
\]
Because $\alpha=(K+\sigma_*^2 I_N)^{-1}y$ and $K+\sigma_*^2 I_N$ is invertible, $\|\alpha\|_\infty<\infty$, hence
\[
\sum_{i=1}^N |\alpha_i|\le N\|\alpha\|_\infty
\quad\text{and therefore}\quad
|\nu_N(\mu)-\nu_N(\mu')|\le L_{\nu_N} W_1(\mu,\mu').
\]

Now let $k_\mu=(k(\mu,\mu_1),\ldots,k(\mu,\mu_N))^\top$ and set $A=(K+\sigma_*^2 I_N)^{-1}$. Then
\[
\sigma_N^2(\mu)=k(\mu,\mu)-k_\mu^\top A k_\mu.
\]
By assumption, $\sigma_N$ has modulus of continuity $\omega_{\sigma_N}$, i.e.
\[
|\sigma_N(\mu)-\sigma_N(\mu')|\le \omega_{\sigma_N}(\tau)
\qquad\text{whenever }W_1(\mu,\mu')\le \tau.
\]

Let $\{\bar\mu_1,\ldots,\bar\mu_M\}$ be a $\tau$-net of $\mathcal P$. Fix any $\mu\in\mathcal P$ and choose $j$ such that $W_1(\mu,\bar\mu_j)\le \tau$. Then
\[
|f(\mu)-\nu_N(\mu)|
\le |f(\mu)-f(\bar\mu_j)|
+|f(\bar\mu_j)-\nu_N(\bar\mu_j)|
+|\nu_N(\bar\mu_j)-\nu_N(\mu)|.
\]
The first and third terms are bounded by Lipschitz continuity:
\[
|f(\mu)-f(\bar\mu_j)|\le L_f\tau,
\qquad
|\nu_N(\bar\mu_j)-\nu_N(\mu)|\le L_{\nu_N}\tau.
\]
By hypothesis,
\[
|f(\bar\mu_j)-\nu_N(\bar\mu_j)|\le B\,\sigma_N(\bar\mu_j)
\le B\,[\sigma_N(\mu)+\omega_{\sigma_N}(\tau)].
\]
Combining the three bounds yields
\[
|f(\mu)-\nu_N(\mu)|
\le B\,\sigma_N(\mu)+(L_f+L_{\nu_N})\tau+B\,\omega_{\sigma_N}(\tau),
\]
as claimed.
\end{proof}

\begin{proof}[Proof of Corollary \ref{cor:unif_bound_gp}]
Let $\{\bar\mu_1,\ldots,\bar\mu_M\}$ be a $\tau$-net of $\mathcal P$ in $W_1$, where
$M = M(\tau,\mathcal P)$. Under the well-specified GP model, conditional on $\mathcal D_N$,
for each $j=1,\dots,M$,
\[
f(\bar\mu_j)\mid \mathcal D_N \sim
\mathcal N\!\bigl(\nu_N(\bar\mu_j),\,\sigma_N^2(\bar\mu_j)\bigr).
\]

Fix $j$. If $\sigma_N(\bar\mu_j)>0$, then
\[
Z_j
:=
\frac{f(\bar\mu_j)-\nu_N(\bar\mu_j)}{\sigma_N(\bar\mu_j)}
\;|\; \mathcal D_N
\sim \mathcal N(0,1),
\]
and therefore, by the definition
\[
\beta(\tau)
=
\left[
\Phi^{-1}\!\left(1-\frac{\delta}{2M(\tau,\mathcal P)}\right)
\right]^2,
\]
we have
\begin{align*}
\mathbb P\!\left(
|f(\bar\mu_j)-\nu_N(\bar\mu_j)|
> \sqrt{\beta(\tau)}\,\sigma_N(\bar\mu_j)
\,\middle|\, \mathcal D_N
\right)
&=
\mathbb P\!\left(
|Z_j|>\Phi^{-1}\!\left(1-\frac{\delta}{2M(\tau,\mathcal P)}\right)
\right) \\
&=
\frac{\delta}{M(\tau,\mathcal P)}.
\end{align*}
If instead $\sigma_N(\bar\mu_j)=0$, then the posterior law is degenerate and
$f(\bar\mu_j)=\nu_N(\bar\mu_j)$ almost surely given $\mathcal D_N$, so the same inequality
holds with probability $0\le \delta/M(\tau,\mathcal P)$.

Hence, for every $j=1,\dots,M$,
\[
\mathbb P\!\left(
|f(\bar\mu_j)-\nu_N(\bar\mu_j)|
> \sqrt{\beta(\tau)}\,\sigma_N(\bar\mu_j)
\,\middle|\, \mathcal D_N
\right)
\le
\frac{\delta}{M(\tau,\mathcal P)}.
\]
Applying the union bound over the $M(\tau,\mathcal P)$ net points yields
\[
\mathbb P\!\left(
|f(\bar\mu_j)-\nu_N(\bar\mu_j)|
\le
\sqrt{\beta(\tau)}\,\sigma_N(\bar\mu_j)
\text{ for all } j=1,\dots,M
\,\middle|\, \mathcal D_N
\right)
\ge 1-\delta.
\]

On this event, Theorem~\ref{thm:unif_bound} applies with $B=\sqrt{\beta(\tau)}$, giving
\[
|f(\mu)-\nu_N(\mu)|
\le
\sqrt{\beta(\tau)}\,\sigma_N(\mu)
+
(L_f+L_{\nu_N})\tau
+
\sqrt{\beta(\tau)}\,\omega_{\sigma_N}(\tau)
\qquad
\text{for all }\mu\in\mathcal P.
\]
Since
\[
\gamma(\tau)=(L_f+L_{\nu_N})\tau+\sqrt{\beta(\tau)}\,\omega_{\sigma_N}(\tau),
\]
the claimed bound follows.
\end{proof}

\begin{proof}[Proof of Corollary \ref{cor:unif_bound_pcpwa}]
Let $\{\bar\mu_1,\ldots,\bar\mu_M\}$ be a $\tau$-net of $(\mathcal P,D)$, where
$M=M_{\mathrm{PCPWA}}(\tau,\mathcal P)$. Under the well-specified GP model with prior
$f\sim GP(0,k_{\mathrm{PCPWA}})$, conditional on $\mathcal D_N$, for each $j=1,\dots,M$,
\[
f(\bar\mu_j)\mid \mathcal D_N
\sim
\mathcal N\!\bigl(\nu_N(\bar\mu_j),\,\sigma_N^2(\bar\mu_j)\bigr).
\]

Define
\[
\beta_{\mathrm{PCPWA}}(\tau)
=
\left[
\Phi^{-1}\!\left(
1-\frac{\delta}{2M_{\mathrm{PCPWA}}(\tau,\mathcal P)}
\right)
\right]^2.
\]
Arguing exactly as in the proof of Corollary~\ref{cor:unif_bound_gp}, for each net point
$\bar\mu_j$,
\[
\mathbb P\!\left(
|f(\bar\mu_j)-\nu_N(\bar\mu_j)|
>
\sqrt{\beta_{\mathrm{PCPWA}}(\tau)}\,\sigma_N(\bar\mu_j)
\,\middle|\,\mathcal D_N
\right)
\le
\frac{\delta}{M_{\mathrm{PCPWA}}(\tau,\mathcal P)}.
\]
A union bound therefore gives, conditional on $\mathcal D_N$,
\[
|f(\bar\mu_j)-\nu_N(\bar\mu_j)|
\le
\sqrt{\beta_{\mathrm{PCPWA}}(\tau)}\,\sigma_N(\bar\mu_j),
\qquad j=1,\dots,M,
\]
with posterior probability at least $1-\delta$.

Now apply Theorem~\ref{thm:unif_bound} with the pseudo-metric $D$ in place of $W_1$,
using the assumed $D$-Lipschitz bounds for $f$ and $k_{\mathrm{PCPWA}}$ and the modulus
of continuity $\omega_{\sigma_N}$ with respect to $D$, with
\[
B=\sqrt{\beta_{\mathrm{PCPWA}}(\tau)}.
\]
This yields
\[
|f(\mu)-\nu_N(\mu)|
\le
\sqrt{\beta_{\mathrm{PCPWA}}(\tau)}\,\sigma_N(\mu)
+
(L_f+L_{\nu_N})\tau
+
\sqrt{\beta_{\mathrm{PCPWA}}(\tau)}\,\omega_{\sigma_N}(\tau)
\qquad
\text{for all }\mu\in\mathcal P,
\]
with posterior probability at least $1-\delta$. Writing
\[
\gamma_{\mathrm{PCPWA}}(\tau)
=
(L_f+L_{\nu_N})\tau
+
\sqrt{\beta_{\mathrm{PCPWA}}(\tau)}\,\omega_{\sigma_N}(\tau)
\]
completes the proof.
\end{proof}

\section{Proof for Proposition \ref{prop:swgp_bound}}
\begin{proof}
Fix $u\in\mathbb S^{d-1}$. Let $\mathcal E_u$ denote the posterior event that the projected uniform bound holds:
\[
|f(\mu)-\nu_N^u(\mu)| \le \sqrt{\beta_u(\tau)}\,\sigma_N^u(\mu) + \gamma_u(\tau)
\qquad\text{for all }\mu\in\mathcal P.
\]
By assumption, $\mathbb P(\mathcal E_u\mid\mathcal D_N)\ge 1-\delta$.

On $\mathcal E_u$, evaluating at $\mu=\mu_X$ gives
\[
|f(\mu_X)-\nu_N^u(\mu_X)| \le \sqrt{\beta_u(\tau)}\,\sigma_N^u(\mu_X) + \gamma_u(\tau).
\]
Assume $z>\sqrt{\beta_u(\tau)}$ and $\sigma_N^u(\mu_X)\ge \gamma_u(\tau)/(z-\sqrt{\beta_u(\tau)})$. Then
\[
\sqrt{\beta_u(\tau)}\,\sigma_N^u(\mu_X)+\gamma_u(\tau)\le z\,\sigma_N^u(\mu_X),
\]
so $|f(\mu_X)-\nu_N^u(\mu_X)| \le z\,\sigma_N^u(\mu_X)$, which is equivalent to $f(\mu_X)\in I^u_\alpha(\mu_X)$. Therefore,
\[
\mathbb P\!\left(f(\mu_X)\in I^u_\alpha(\mu_X)\,\middle|\,\mathcal D_N\right)
\ge \mathbb P(\mathcal E_u\mid\mathcal D_N) \ge 1-\delta,
\]
which proves the claim.
\end{proof}

\section{Bounds for derivatives of log likelihoods}

\label{appdx:derivatives-bounds}

Here we derive bounds of derivatives for $k_{GW_{p}}$ as below:
\begin{equation}
k_{GW_{p}}(\mu,\nu)=\lambda\exp(-\sigma GW_{p}(\mu,\nu)^{2}),\label{eq:kernel_GWp}
\end{equation}

\begin{lem}
\label{lem:(Bounds-for-derivatives}(Bounds for derivatives of log
likelihoods with GW) For $p=2$ and $\mu,\nu$ are both multivariate
Gaussian distributions defined on $\mathbb{R}^{d},\mathbb{R}^{m}$,
the bounds for derivatives with respect to the kernel hyper-parameters
$\lambda,\sigma$ are in closed form as functions of $GW_{p}(\mu,\nu)^{2}$. 
\end{lem}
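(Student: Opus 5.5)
The plan is to differentiate the log marginal likelihood directly with respect to $\lambda$ and $\sigma$. The only data-dependent quantities that appear will be the entries $G_{ij}:=GW_2(\mu_i,\mu_j)^2$. Write the Gram matrix as
\[
K=\lambda E,\qquad E_{ij}=\exp(-\sigma G_{ij}),\qquad C:=K+\sigma_*^2 I_N .
\]
The log likelihood is
\[
\mathcal L=-\tfrac12 y^\top C^{-1}y-\tfrac12\log\det C-\tfrac N2\log 2\pi .
\]
The standard identity gives
\[
\partial_\theta\mathcal L=\tfrac12 y^\top C^{-1}(\partial_\theta C)C^{-1}y-\tfrac12\operatorname{tr}\bigl(C^{-1}\partial_\theta C\bigr).
\]
The kernel derivatives are explicit:
\[
\partial_\lambda K_{ij}=\exp(-\sigma G_{ij}),\qquad \partial_\sigma K_{ij}=-\lambda\,G_{ij}\exp(-\sigma G_{ij}).
\]
So both derivative matrices are entrywise closed-form functions of the $G_{ij}$ alone.

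The bounds will use three elementary facts.

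First, $\|C^{-1}\|_2\le\sigma_*^{-2}$. This holds because $K$ is positive semidefinite (or, if positive definiteness of the GW kernel is unknown, I would simply assume $C$ is positive definite, i.e. $\lambda_{\min}(K)>-\sigma_*^2$, and replace $\sigma_*^{2}$ by $\lambda_{\min}(C)$).

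Second, the entry bounds $0<\exp(-\sigma G)\le1$ and $0\le G\exp(-\sigma G)\le 1/(e\sigma)$. The second maximum is attained at $G=1/\sigma$. I will keep the sharper form $G_{ij}e^{-\sigma G_{ij}}$ so that the bound stays a function of $GW_2^2$, as the lemma asserts.

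Third, $\|A\|_2\le\|A\|_F$ and $|\operatorname{tr}(C^{-1}A)|\le\|C^{-1}\|_F\|A\|_F\le\sqrt N\sigma_*^{-2}\|A\|_F$.

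Combining these gives the two bounds
\[
|\partial_\lambda\mathcal L|\le\frac{\|y\|^2}{2\sigma_*^4}\Bigl(\sum_{i,j}e^{-2\sigma G_{ij}}\Bigr)^{1/2}+\frac{\sqrt N}{2\sigma_*^2}\Bigl(\sum_{i,j}e^{-2\sigma G_{ij}}\Bigr)^{1/2}
\]
and the analogous bound for $\partial_\sigma\mathcal L$, with each entry replaced by $\lambda G_{ij}e^{-\sigma G_{ij}}$. Uniform versions follow by inserting $e^{-\sigma G}\le1$ and $Ge^{-\sigma G}\le(e\sigma)^{-1}$.

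To make "closed form" literal, I will note that for Gaussian $\mu,\nu$ with $p=2$ the quantity $GW_2^2$ has (upper and lower) closed-form expressions in terms of the covariance eigenvalues. These are the known Gaussian Gromov–Wasserstein bounds. Since $t\mapsto e^{-\sigma t}$ is monotone and $t\mapsto te^{-\sigma t}$ is unimodal, any such bracket on $GW_2^2$ transfers to the derivative bounds.

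The main obstacle is this last step: the exact $GW_2$ between Gaussians is generally not known in closed form. I would first try to state the lemma's bounds purely as functions of the $G_{ij}$, which the statement permits. I would then invoke the eigenvalue-based bracket only as an optional remark.
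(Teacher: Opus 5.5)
Your proposal is correct, but it does more than the paper and puts the weight in a different place.

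\textbf{What the paper does.} The paper never bounds $\partial_\theta\mathcal L$ itself. It writes the same entrywise derivatives $\partial K/\partial\lambda=\{e^{-\sigma GW_2^2}\}$ and $\partial K/\partial\sigma=\{-\lambda\,GW_2^2e^{-\sigma GW_2^2}\}$. It then invokes the Delon et al.\ sandwich $LGW_2^2\le GW_2^2\le GGW_2^2$, where both sides are explicit in the sorted covariance eigenvalues $D_0,D_1$. Finally it transfers the sandwich to the derivative entries ``by monotonicity of the exponential,'' leaving the substitution into the likelihood gradient implicit. For the paper, that sandwich \emph{is} the meaning of ``closed form'': a surrogate that avoids solving a GW problem inside the hyperparameter optimizer.

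\textbf{What you do differently.} For you the sandwich is an optional last step. Your real content is the spectral bounds on $C^{-1}$ ($\|C^{-1}\|_2\le\sigma_*^{-2}$, and Cauchy--Schwarz for the trace term). These yield explicit bounds on the actual likelihood gradient in terms of the $G_{ij}$. They also give uniform versions via $e^{-\sigma G}\le1$ and $Ge^{-\sigma G}\le(e\sigma)^{-1}$, which need no Gaussianity at all.

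Your version is also more careful at two points the paper skips:
\begin{itemize}
\item You make explicit that $C$ must be positive definite. The paper itself says positive definiteness of GW kernels is unknown.
\item You note that $t\mapsto te^{-\sigma t}$ is only unimodal. So for the $\sigma$-derivative the bracket transfers through the max/min of $te^{-\sigma t}$ over $[LGW_2^2,GGW_2^2]$, not through monotonicity. The max equals $(e\sigma)^{-1}$ when $1/\sigma$ lies in the interval.
\end{itemize}

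\textbf{One caution if you use the eigenvalue bracket.} Delon et al.\ prove it for Gromov--Wasserstein with squared Euclidean ground costs, $\bigl(\|x-x'\|^2-\|y-y'\|^2\bigr)^2$. That is not the $|d_X-d_Y|^2$ integrand of the paper's definition \eqref{eq:def_GW_p}. Your $G_{ij}$ must therefore be defined with the squared-cost version for those closed forms to apply.
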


We want to derive the $\frac{\partial\bm{K}}{\partial\lambda}$
and $\frac{\partial\bm{K}}{\partial\sigma}$ and plug it back respectively.
We want to use Theorem 3.1, 4.1 and Proposition 4.2 in \citet{delon2022gromov}
to obtain a lower bound for these derivatives as a closed form surrogate
to the exact. Suppose without any loss of generality that $d\leq m$.
Let $\mu=N(\mu_{0},\Sigma_{0})$ on $\mathbb{R}^{m}$ and $\nu=N(\mu_{1},\Sigma_{1})$
on $\mathbb{R}^{d}$ be two Gaussian measures. Let $P_{0},D_{0}$
and $P_{1},D_{1}$ be the respective diagonalizations of $\Sigma_{0}=P_{0}D_{0}P_{0}^{T}\in\mathbb{R}^{m\times m}$
and $\Sigma_{1}=P_{1}D_{1}P_{1}^{T}\in\mathbb{R}^{m\times m}$ which
sort eigenvalues in decreasing order. We suppose that $\Sigma_{0}$
is non-singular. Then the $GW_{2}(\mu,\nu)$ is bounded by 
\begin{equation}
LGW_{2}(\mu,\nu)^{2}\leq GW_{2}(\mu,\nu)^{2}\leq GGW_{2}(\mu,\nu)^{2},\label{eq:LGW_GGW}
\end{equation}
where the lower and upper bounds have the following closed form expressions:
\begin{align}
LGW_{2}(\mu,\nu)^{2} & =4\left(\text{tr}(D_{0})-\text{tr}(D_{1})\right)^{2}+4\left(\|D_{0}\|_{F}-\|D_{1}\|_{F}\right)^{2}+4\|D_{0}^{(d)}-D_{1}\|_{F}^{2}\\
 & +4\left(\|D_{0}\|_{F}-\|D_{0}^{(d)}\|_{F}\right)^{2}.\nonumber \\
GGW_{2}(\mu,\nu)^{2} & =4\left(\text{tr}(D_{0})-\text{tr}(D_{1})\right)^{2}+8\left(\|D_{0}^{(d)}-D_{1}\|_{F}^{2}\right)+8\left(\|D_{0}\|_{F}^{2}-\|D_{0}^{(d)}\|_{F}^{2}\right).
\end{align}
and the we use the notation $D^{(d)}$ to denote the $d\times d$
submatrix containing the first $d$ rows and the first $d$ columns
of any matrix $D$; $\|\cdot\|_{F}$ is the matrix Frobenius norm.
Recall that we can set $x\sim\mu=N(\mu_{0},\Sigma_{0})$ and $x'\sim\nu=N(\mu_{1},\Sigma_{1})$,
then we have: 
\begin{align*}
\frac{\partial\bm{K}}{\partial\lambda} & =\left\{ \exp(-\sigma GW_{2}(x,x')^{2})\right\} _{x,x'},\\
\frac{\partial\bm{K}}{\partial\sigma} & =\left\{ -\lambda GW_{2}(x,x')^{2}\exp(-\sigma GW_{2}(x,x')^{2})\right\} _{x,x'}.
\end{align*}
These expressions admit the bound \eqref{eq:LGW_GGW} due to the fact
that exponential function is strictly monotonic over the real line.
This is again a computational tool which can be used in optimization
algorithms to find the MLEs of $\lambda$ and $\sigma$. Instead of
putting nested loops to solve the optimal transportation problem in
the definition of $GW_{2}$, we only optimize the $LGW_{2}(\mu,\nu)^{2}$
and $GGW_{2}(\mu,\nu)^{2}$ (in closed form) to simplify the fitting
procedure. However, in practice it usually suffices to maximize $LGW_{2}(\mu,\nu)^{2}$.
At the same time, the \eqref{eq:LGW_GGW} can be used to verify the
Assumption 3 in \citet{zhou2019gaussian} when the $P_{0},D_{0}$
and $P_{1},D_{1}$ are parameterized by a smooth parameter (i.e.,
as a smooth matrix-valued function of a kernel parameter).
\section{Computational complexity of competing methods}
\label{sec:complexity}
We briefly compare the leading-order computational complexity of the GP
methods considered in this work.  For all models, once a kernel matrix
$K \in \mathbb R^{N\times N}$ has been assembled, exact GP training via
Cholesky decomposition costs $\mathcal O(N^3)$, and prediction for
$N_{\star}$ test inputs costs $\mathcal O(N^2 + N N_{\star})$ (or
$\mathcal O(N N_{\star})$ once the Cholesky factor is cached).  The main
differences between methods therefore arise from (i) the number of
\emph{effective} training points they use and (ii) the cost of computing
each kernel entry from the underlying clouds $U_i$.

Throughout, $U_i = \{u_{ij}\}_{j=1}^{m_i}\subset\mathbb R^d$ denotes the
empirical input distribution for training example $i$, and
$M = \max_i m_i$.  We assume $m_i \asymp M$ for simplicity.

\paragraph{Regular GP on means.}
The regular GP baseline \citep{williams2006gpml} treats each
distribution as a single point in $\mathbb R^d$ via its empirical mean
$\bar u_i = m_i^{-1}\sum_{j=1}^{m_i} u_{ij}$.  Computing all means costs
$\mathcal O(N M d)$ once, and a kernel evaluation between two means is
$\mathcal O(d)$.  Building $K$ therefore costs $\mathcal O(N^2 d)$, after
which GP training is $\mathcal O(N^3)$.  This is the reference complexity
for a standard Euclidean GP with $N$ inputs of dimension $d$.

\paragraph{Aggregated GP on raw samples.}
The aggregated GP baseline (an ensemble of standard GPs
\citep{williams2006gpml}) treats each sample $u_{ij}$ as an
independent input.  The effective number of training points becomes
$\tilde N = \sum_i m_i \asymp N M$.  Kernel evaluations remain
$\mathcal O(d)$, but the kernel matrix is now of size
$\tilde N \times \tilde N$, so training scales as
\[
\mathcal O(\tilde N^3) \;=\; \mathcal O(N^3 M^3),
\]
with kernel assembly cost $\mathcal O(N^2 M^2 d)$.  This cubic dependence
on $M$ makes the aggregated GP rapidly infeasible as the number of
samples per distribution grows.

\paragraph{Uncertain-input GP.}
The uncertain-input GP of \citet{girard2002gaussian} (see also
\citet{mchutchon2011gaussian}) replaces each $U_i$ by a Gaussian
$\mathcal N(\mu_i,\Sigma_i)$, where $\mu_i$ and $\Sigma_i$ are the
empirical mean and covariance.  Estimating these parameters costs
$\mathcal O(N M d^2)$ in general (or $\mathcal O(N M d)$ for diagonal
covariances).  The kernel between two Gaussians has a closed form
involving $(\Sigma_i + \Sigma_j)$ and $(\mu_i - \mu_j)$; with full
covariances this requires an inversion and determinant per pair, for a
cost of $\mathcal O(d^3)$ per kernel entry and $\mathcal O(N^2 d^3)$ to
assemble $K$.  For diagonal or low-rank covariances the per-pair cost
reduces to $\mathcal O(d)$ or $\mathcal O(k d)$, giving
$\mathcal O(N^2 d)$ kernel assembly and again $\mathcal O(N^3)$ training.

\paragraph{Full Wasserstein GP.}
Wasserstein GPs build kernels from optimal-transport distances between
empirical distributions
\citep{mallasto2017learning,bachoc2017gaussian,Panaretos2019,candelieri2022gaussian}.
Computing the pairwise cost matrix between $U_i$ and $U_j$ is
$\mathcal O(M^2 d)$, and solving the OT problem (e.g. with Sinkhorn or
network simplex) requires at least $\mathcal O(M^2)$ per iteration and
$\mathcal O(M^2)$ memory \citep{peyre2019computational}.  In practice this yields a
per-pair distance cost of order $\mathcal O(M^2 d)$ to
$\mathcal O(M^3)$, and thus a total cost of
\[
\mathcal O(N^2 M^2 d) \quad \text{to} \quad \mathcal O(N^2 M^3)
\]
for computing $K$, plus the usual $\mathcal O(N^3)$ for GP training.
This quadratic (or worse) dependence on $M$ motivates the separable and
sliced variants below.

\paragraph{\PWAGP (separable Wasserstein kernel).}
Our \PWAGP uses a separable kernel built from one-dimensional
Wasserstein distances along each coordinate, still rooted in the OT
geometry of \citet{Panaretos2019}.  In 1D, the $W_2$ distance between
empirical distributions of size $M$ reduces to sorting, which costs
$\mathcal O(M \log M)$ per marginal.  For $d$ dimensions, a naive
implementation therefore costs $\mathcal O(d M \log M)$ per pair and
$\mathcal O(N^2 d M \log M)$ in total to assemble $K$, which is much
cheaper than full OT as soon as $M$ is moderately large.  Training again
costs $\mathcal O(N^3)$.

\paragraph{\PCPWA (PCA-based separable Wasserstein kernel).}
\PCPWA first computes a low-rank PCA basis for the pooled samples and
then applies the separable 1D Wasserstein construction along the leading
$k$ principal directions.  The up-front PCA on all samples costs
$\mathcal O(N M d^2)$ (or less with randomized SVD), but this is a
one-time cost.  Thereafter, the per-pair distance uses only $k$
one-dimensional marginals, for a cost $\mathcal O(k M \log M)$ per pair
and $\mathcal O(N^2 k M \log M)$ in total.  Since typically $k \ll d$,
\PCPWA can be substantially cheaper than \PWAGP in high dimensions while
also adapting to the intrinsic low-dimensional structure of the clouds.

\paragraph{Sliced Wasserstein GP.}
The sliced Wasserstein GP \citep{meunier2022slicedwasserstein} draws
$R$ random projection directions
$\{\theta_r\}_{r=1}^R \subset \mathbb S^{d-1}$ and approximates the
Wasserstein distance by averaging 1D Wasserstein distances along these
projections.  Projecting all samples in $U_i$ onto $\theta_r$ costs
$\mathcal O(M d)$ per direction and $\mathcal O(R M d)$ in total per
cloud.  These projections can be precomputed and reused across pairs.
Given the projected samples, computing 1D distances between $U_i$ and
$U_j$ across all $R$ directions costs $\mathcal O(R M \log M)$ per pair.
The overall kernel assembly cost is therefore
\[
\mathcal O(N R M d) \quad \text{(precomputation)} \;+\;
\mathcal O(N^2 R M \log M) \quad \text{(pairwise distances)},
\]
followed by $\mathcal O(N^3)$ training.  In our code $R$ is treated as a
moderate constant, so the dominant dependence is
$\mathcal O(N^2 M \log M)$.

\paragraph{KME and MMD GPs.}
The KME and MMD kernels embed each empirical distribution into an RKHS
using sample averages of feature maps
\citep{muandet2017kme,szabo2016distreg}.  With a Gaussian base kernel
and empirical distributions of size $M$, the kernel between $U_i$ and
$U_j$ typically uses a double sum over samples,
\[
k_{\mathrm{KME}}(U_i,U_j)
  \;\approx\; \frac{1}{M^2} \sum_{a=1}^{M} \sum_{b=1}^{M}
    \exp\!\Bigl(-\frac{\|u_{ia} - u_{jb}\|^2}{2\ell^2}\Bigr),
\]
and similarly for the squared MMD.  Each kernel evaluation is therefore
$\mathcal O(M^2 d)$, and assembling $K$ costs
$\mathcal O(N^2 M^2 d)$, again followed by $\mathcal O(N^3)$ training.
Random features can reduce this to $\mathcal O(N^2 D d)$ with $D$
features, but in our experiments we use the exact empirical embedding.

\paragraph{Low-rank Fréchet EIV regression.}
The low-rank Fréchet errors-in-variables (EIV) regression of
\citet{song2023frechet} represents each distributional covariate via a
rank-$r$ approximation in a suitable Hilbert space.  Constructing these
low-rank features from $M$ samples per distribution costs roughly
$\mathcal O(N M r)$, after which Fréchet regression (or GP regression on
the $r$-dimensional coefficients) operates on $N$ inputs of dimension
$r$.  Kernel assembly is therefore $\mathcal O(N^2 r)$, comparable to a
regular GP in $r$ dimensions, with the usual $\mathcal O(N^3)$ training
cost.  When $r \ll d$ and $M$ is moderate, this yields a competitive
complexity to our PCA-based \PCPWA construction.
 
Table~\ref{tab:complexity} summarizes the leading-order complexity of
the main models as a function of the number of clouds $N$, samples per
cloud $M$, and dimension $d$ (ignoring the shared $\mathcal O(N^3)$ GP
training cost).  Regular and uncertain-input GPs
\citep{williams2006gpml,girard2002gaussian,mchutchon2011gaussian}
are cheapest in terms of $M$, since they compress each distribution to a
finite-dimensional summary.  Aggregated GP is prohibitively expensive as
soon as $M$ grows.  Among fully distributional methods, our separable
Wasserstein kernels (\PWAGP, \PCPWA), together with \SWGP
\citep{meunier2022slicedwasserstein}, enjoy \emph{linear} or near-linear
dependence on $M$ (up to logarithmic factors) thanks to one-dimensional
Wasserstein computations
\citep{Panaretos2019,peyre2019computational}.  In contrast, full \WGP, KME, and MMD
kernels \citep{mallasto2017learning,bachoc2017gaussian,muandet2017kme,szabo2016distreg}
scale at least quadratically in $M$.  This helps explain the empirical
observation in our simulated and accelerator experiments that \PWAGP
and \PCPWA provide competitive errors-in-variables performance and
well-calibrated uncertainty while remaining computationally tractable.

\begin{table}[t]
\centering
\label{tab:complexity}
\resizebox{\columnwidth}{!}{%
\begin{tabular}{lcc}
\toprule
Method & Effective \# inputs & Kernel assembly cost (big-O) \\
\midrule
Regular GP on means \citep{williams2006gpml}
    & $N$     & $\mathcal O(N^2 d)$ \\
Aggregated GP on samples \citep{williams2006gpml}
    & $N M$   & $\mathcal O(N^2 M^2 d)$ (matrix) $+$ $\mathcal O(N^3 M^3)$ (training) \\
Uncertain-input GP \citep{girard2002gaussian,mchutchon2011gaussian}
    & $N$     & $\mathcal O(N^2 d^3)$ (full cov.) or $\mathcal O(N^2 d)$ (diag) \\
Full \WGP \citep{mallasto2017learning,bachoc2017gaussian,candelieri2022gaussian}
    & $N$     & $\mathcal O(N^2 M^2 d)$ to $\mathcal O(N^2 M^3)$ \\
\PWAGP (separable Wasserstein)
    & $N$     & $\mathcal O(N^2 d M \log M)$ \\
\PCPWA (PCA-based PWA)
    & $N$     & $\mathcal O(N M d^2)$ (PCA) $+$ $\mathcal O(N^2 k M \log M)$ \\
\SWGP (sliced Wasserstein) \citep{meunier2022slicedwasserstein}
    & $N$     & $\mathcal O(N R M d) + \mathcal O(N^2 R M \log M)$ \\
KME-GP \citep{muandet2017kme}
    & $N$     & $\mathcal O(N^2 M^2 d)$ \\
MMD-GP \citep{szabo2016distreg}
    & $N$     & $\mathcal O(N^2 M^2 d)$ \\
Low-rank Fréchet EIV regression \citep{song2023frechet}
    & $N$     & $\mathcal O(N M r) + \mathcal O(N^2 r)$ \\
\bottomrule
\end{tabular}%
}
\caption{Leading-order cost of assembling the $N\times N$ kernel matrix
for the different methods, assuming $m_i \asymp M$ samples per cloud.
All methods then incur a shared $\mathcal O(N^3)$ cost for exact GP
training.}
\end{table}

\end{document}